\newtheorem{theorem}{Theorem}
\newtheorem{lemma}[theorem]{Lemma}
\newtheorem{xmpl}{Example}
\newtheorem{remark}{Remark}
\title{\LARGE \bf A New Finite-Horizon Dynamic Programming Analysis of Nonanticipative Rate-Distortion Function for Markov Sources}
\author{Zixuan He$^{1}$, Charalambos D. Charalambous$^2$, and Photios A. Stavrou$^{1}$
 \thanks{*The work of Z. He was supported by the Huawei France-EURECOM Chair on Future Wireless Networks. The work of P. A. Stavrou was supported in part by the SNS JU project 6G-GOALS \cite{strinati:2024} under the EU’s Horizon programme (Grant Agreement No. 101139232) and by the Huawei France-EURECOM Chair on Future Wireless Networks.}
	\thanks{$^{1}$Z. He and P. A. Stavrou are with the Foundation and Algorithm Group, Communication Systems Department, EURECOM, France. {\tt\small email: \{zixuan.he,fotios.stavrou\}@eurecom.fr}}
	\thanks{$^{2}$C. D. Charalambous is with the Department of Electrical and Computer Engineering, University of Cyprus, Cyprus.
		{\tt\small email: chadcha@ucy.ac.cy}}%
}
\begin{document}

\maketitle
\thispagestyle{empty}
\pagestyle{empty}

\begin{abstract}
This paper deals with the computation of a non-asymptotic lower bound by means of the nonanticipative rate-distortion function (NRDF) on the discrete-time zero-delay variable-rate lossy compression problem for discrete Markov sources with per-stage, single-letter distortion. First, we derive a new information structure of the NRDF for Markov sources and single-letter distortions. Second, we derive new convexity results on the NRDF, which facilitate the use of Lagrange duality theorem to cast the problem as an unconstrained partially observable finite-time horizon stochastic dynamic programming (DP) algorithm subject to a probabilistic state (belief state) that summarizes the past information about the reproduction symbols and takes values in a continuous state space. Instead of approximating the DP algorithm directly, we use Karush-Kuhn-Tucker (KKT) conditions to find an implicit closed-form expression of the optimal control policy of the stochastic DP (i.e., the minimizing distribution of the NRDF) and approximate the control policy and the cost-to-go function (a function of the rate) stage-wise, via a novel dynamic alternating minimization (AM) approach, that is realized by an offline algorithm operating using backward recursions, with provable convergence guarantees. We obtain the clean values of the aforementioned quantities using an online (forward) algorithm operating for any finite-time horizon. Our methodology provides an approximate solution to the exact NRDF solution, which becomes near-optimal as the search space of the belief state becomes sufficiently large at each time stage. We corroborate our theoretical findings with simulation studies where we apply our algorithms assuming time-varying and time-invariant binary Markov processes.
	
\end{abstract}

\section{Introduction}
\label{section:introduction}

Classical lossy source coding problem refers to the scenario where one encodes a long block of source symbols so that it allows the distortion asymptotically to achieve the Shannon's limit at the minimum empirical bit-rate \cite{shannon:1959}. The long block codes that are required for the compression scheme to operate on the Shannon's limit induce long coding delays, and this makes classical lossy compression undesirable in many emerging delay-sensitive applications such as networked control systems \cite{hespanha:2007}, wireless sensor networks \cite{akyildiz:2002}, and more recently in the merging field of semantic communications \cite{strinati:2024}.
\par A more suitable lossy compression paradigm to deal with delay-constrained applications, is the so-called zero-delay lossy source coding problem. Contrary to classical lossy compression, the compressed symbols of the information source symbols are generated by an encoder  
without delay, and are communicated over a discrete noiseless channel to the decoder, which reconstructs the source symbols, also without
delay, subject to a fidelity. The discrete noiseless channel may operate assuming either fixed or variable-rates.

{\it Literature Review:} A number of important results are documented in the literature on zero-delay lossy compression schemes. The early works \cite{witsenhausen:1979,varaiya:1983} laid the foundations to understand the structural properties of the optimal zero-delay codes (assuming primarily fixed-rate) using stochastic control and dynamic programming (DP) \cite{kumar:1986,bertsekas:2005}. Similar results were generalized by many researchers, see e.g., \cite{teneketzis:2006,mahajan:2009,linder:2014,wood:2017,ghomi:2021}. In \cite{kaspi:2012}, the authors considered structural theorems for the zero-delay lossy source coding problem with variable-rate constraints with and without side information at the decoder. Recently in \cite{cregg:2024}, the authors invoked reinforcement learning algorithms via a quantized Q-learning method in the infinite-time horizon to approximate near-optimally, the zero-delay coding problem for fixed-rates assuming finite-alphabet stationary Markov sources. On another relevant research direction, instead of attacking directly the zero-delay lossy compression problem subject to variable-rate constraints, information-theoretic upper and lower bounds are derived building on the earlier work of \cite{gorbunov:1972} who introduced an information-theoretic measure called nonanticipatory $\epsilon-$entropy\footnote{Also found in the literature as sequential rate-distortion-function (RDF) \cite{tatikonda:2004} and nonanticipative RDF (NRDF) \cite{charalambous:2014}.}. This line of research primarily studied bounding techniques on linear (perhaps controlled) Markov systems driven by additive Gaussian or non-Gaussian noise and also, the derivation of information-theoretic closed-form solutions \cite{tatikonda:2004,silva:2011,derpich:2012,tanaka:2017,stavrou:2018siam, kostina:2019a, charalambous:2022,stavrou:2024tac}.

{\it Contributions:} In this paper, we analyze a non-asymptotic lower bound on the empirical rates of a discrete-time zero-delay variable-rate lossy source coding system assuming discrete and possibly time-varying Markov sources subject to a per-stage, average single-letter distortion criterion. First, we derive a structural property of our lower bound (obtained via NRDF) (see Lemma \ref{lemma:information-structure}) for the specific class of sources and fidelity constraint. Second, we derive some new convexity properties (under certain conditions) that characterize the functionals of the resulting optimization problem (see Theorems \ref{theorem:sequential-convexity}, \ref{theorem:convexity-set}), and we leverage those to cast it as an unconstrained partially observable finite-time horizon stochastic DP algorithm with a continuous state space \cite{bertsekas:2005} (see equations \eqref{DP-time-n}, \eqref{DP-time-t}). Instead of solving the DP algorithm directly, we optimize with respect to the control policy (which in our case corresponds to the minimizing distribution of the NRDF) that results in some implicit closed-form recursions obtained backward in time. These are computed by proposing a new dynamic AM scheme (see Lemma \ref{lemma:double-min}) that approximates offline (see Algorithm \ref{algo:DP-backward}) the control policy and the cost-to-go function (a function of the rate) by discretizing the continuous state space into a finite state space at each time stage. Subsequently, we propose a forward (online) algorithm (see Algorithm \ref{algo:DP-forward}) to compute the clean values of the aforementioned quantities for any finite-time horizon. Our offline scheme has provable convergence guarantees per stage (see Theorems \ref{theorem:convergence}, \ref{theorem:stopping-criterion}) and approaches a near-optimal solution once the search space of the discretized (finite) state becomes sufficiently large. We corroborate our theoretical results with numerical simulations in which we demonstrate the behavior of both time-varying and time-invariant binary Markov sources and their corresponding control policy and the cost-to-go (i.e., the rate per stage). An interesting aspect of implementing our offline algorithm in our computational studies, is the use of parallel processing, which alleviates the issue of computational complexity compared to standard single-thread processing. To the best of our knowledge, this is the first paper in which (i) the optimization of NRDF for discrete Markov sources and single-letter distortion is reformulated as an unconstrained partially observable finite-time horizon stochastic DP algorithm with continuous state; (ii) the control policy is approximated by means of a novel dynamic AM scheme realized via an offline training algorithm that generalizes the known Blahut-Arimoto algorithm (BAA) \cite{blahut:1982}, followed by an online computation. 

\textit{Notation}: $\mathbb{N}\triangleq\{1,2,\ldots\}$, $\mathbb{N}_0\triangleq\{0,1,\ldots\}$, 
and $\mathbb{N}_j^n\triangleq\{j,\ldots,n\},~j\leq{n},~n\in\mathbb{N}$. We denote a sequence of random variables (RVs) by $X^t=\{X_0,X_1,\ldots,X_t\},t\in\mathbb{N}_0^n$ and their values by $x^t\in\mathcal{X}^t=\{\mathcal{X}_0,\ldots,\mathcal{X}_t\}$, where $\mathcal{X}_t$ denotes the alphabet and hence $\mathcal{X}^t$ the alphabet sequence. A truncated sequence of RVs is denoted by $X_{j}^t=\{X_j,\ldots,X_t\}, j\in\mathbb{N}_0^t, {t}\geq{j}$, and its realizations by $x_{j}^t=\{x_j,\ldots,x_t\}\in\mathcal{X}_{j}^t=\{\mathcal{X}_{j},\ldots,\mathcal{X}_t\}$,~${t}\geq{j}$. The distribution of a RV $X$ on $\mathcal{X}$ is denoted by $P(x)$ and the conditional distribution of a RV $Y$ given $X=x$ is denoted by $P(y|x)$. We indicate with square brackets the functional dependency between mathematical objects, e.g. $P[Q](x)$ and $P[y](x)$ express the functional dependence of a distribution $P(x)$ on another distribution $Q$ and on another realization $y$, respectively. We denote by $\mathbb{E}\{\cdot\}$ the expectation operator, and by $\mathbb{E}^{P^o}\{\cdot\}$ the expectation with respect to a given distribution $P(\cdot)=P^o(\cdot)$.

\section{Problem Statement and Preliminaries}
\label{section:problem-statement}
We consider the discrete-time zero-delay lossy source coding system illustrated in Fig. \ref{fig:l-sc}, operating at any finite-time horizon $n\in\mathbb{N}_0$. The operation of the system is described as follows.
\begin{figure}
\centering\includegraphics[width=.45\textwidth]{./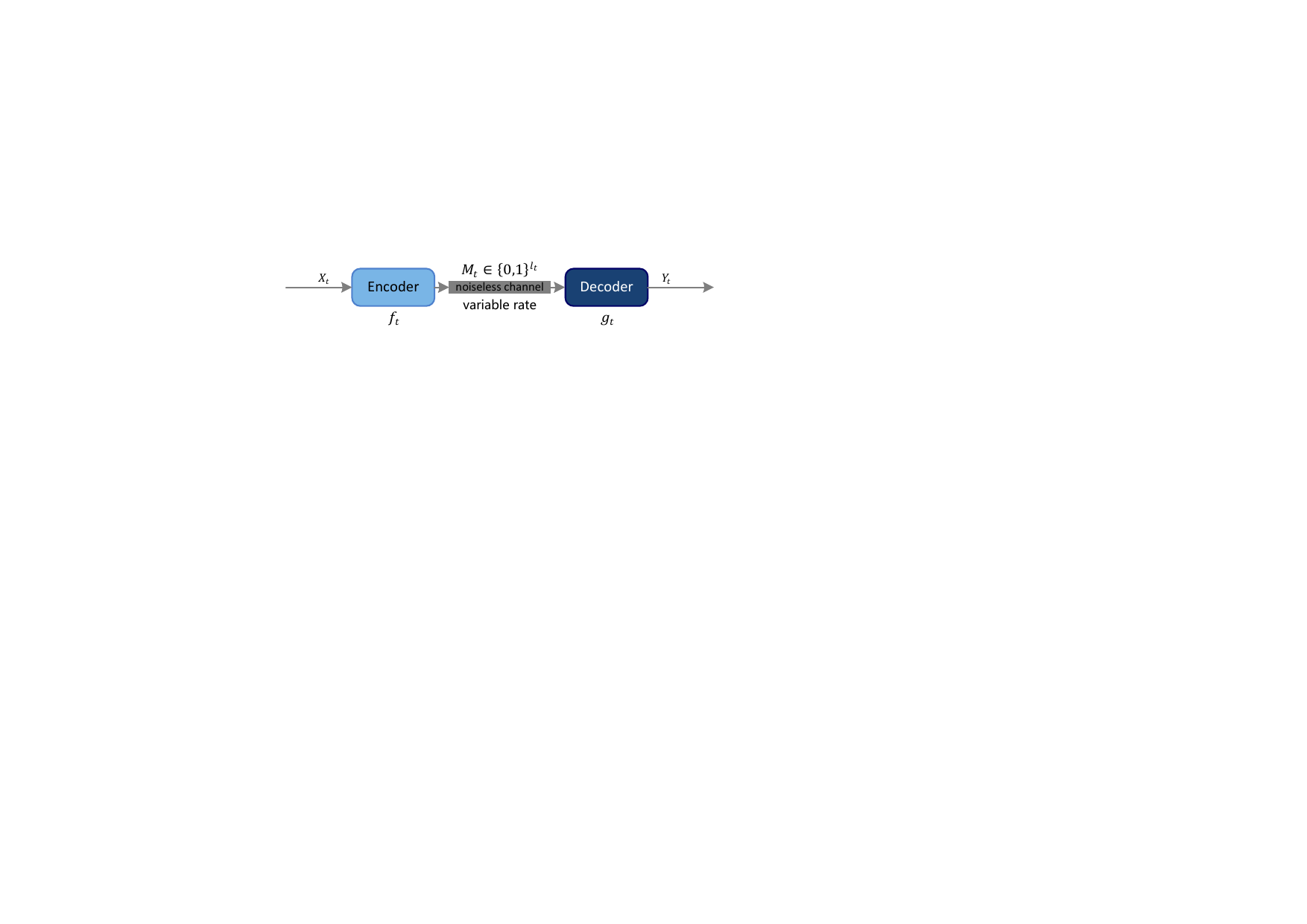}
\caption{A generic zero-delay lossy source coding system}
\label{fig:l-sc}
\end{figure}
\paragraph*{Operation} At each time instant $t\in\mathbb{N}_1^n$, the \textit{source} is modeled as a Markov process (not necessarily time-homogeneous), i.e., with transition probability distribution $P_t(x_t|x_{t-1})$ and initial distribution $P_0(x_0)$, which induce the joint distribution of the sequence of RVs $X^t, t\in\mathbb{N}_0^n$. For any $X_t\in{\cal X}_t$, we assume that the cardinality of  $\mathcal{X}_t$ is finite. The \textit{encoder (E)} $f_t$ encodes the information $X_t$ generated from the source based on the past information $X^{t-1}$ and produces a variable-rate codeword $M_t\in{\cal M}_t\subset\{0,1\}^{l_t}$ of length $l_t$ and expected rate $R_t=\mathbb{E}|l_t|$. The \textit{decoder (D)} $g_t$ receives the past and current codewords $M^t$ to reproduce $Y_t\in\mathcal{Y}_t$,  provided that $Y^{t-1}$ is already reproduced. Again, we assume that for any $Y_t\in{\cal Y}_t$, the cardinality of  $\mathcal{Y}_t$ is finite.  Formally, the \textit{(E), (D)} pair is specified by the sequence of possibly stochastic mappings $f_t:\mathcal{M}^{t-1}\times\mathcal{X}^t\rightarrow\mathcal{M}_t$, and $g_t:\mathcal{M}^t\to\mathcal{Y}_t$, respectively. We note that at $t=0$, the encoder output is $m_0=f_0(x_0)$ and the decoder's output $y_0=g_0(m_0)$. This means that no prior information is assumed at both the encoder and the decoder.

\paragraph*{Fidelity constraint} The system in Fig. \ref{fig:l-sc}, is penalized at each time instant by an additive fidelity $\{D_t\geq{0}:~t\in\mathbb{N}_0^n\}$, the constraint between the source process $\{X_t:t\in\mathbb{N}_0^n\}$ and the reproduction process $\{Y_t:t\in\mathbb{N}_0^n\}$ is described by the per-stage, average single-letter distortion, i.e., $\mathbb{E}\{ \rho_t(X_t,Y_t) \} \leq D_t$, where $\rho_t(x_t,y_t)$ is the single-letter distortion function between the information source $x_t$ and its reproduction $y_t$ at each $t$.

\paragraph*{Performance} The system model in Fig. \ref{fig:l-sc}, subject to the fidelity criterion described above, can be computed by the following empirical rate optimization:
\begin{align}
    R_{[0,n]}^o(D_0,\ldots,D_n) \triangleq
    \inf_{\substack{\{f_t,\ g_t:\ t\in\mathbb{N}_0^n\} \\ \mathbb{E}\{\rho_t(X_t,Y_t)\}\leq D_t,  \forall t\in\mathbb{N}_0^n}} \frac{1}{n+1}\sum_{t=0}^n{R}_t.
\label{RDF-operational}
\end{align}
The solution of \eqref{RDF-operational} depends on every possible code, which makes it very challenging to develop a globally optimal solution achieving also a reasonable computational complexity. As a result, it makes sense to pursue approximate solutions via bounds.

A well-known lower bound on \eqref{RDF-operational} assuming a Markov source and a per-stage, average single-letter distortion, is the NRDF \cite{gorbunov:1972} (see also \cite{stavrou:2020entropy}) given as follows:
\begin{align}
    R_{[0,n]}^{na}(D_0,\ldots,D_n) \triangleq \inf_{\mathcal{Q}_{[0,n]}(D_0,\ldots,D_n)} \frac{1}{n+1} I(X^n\rightarrow Y^n)
\label{NRDF}
\end{align}
where the constraint set 
\begin{align}
\mathcal{Q}_{[0,n]}&(D_0,\ldots,D_n)\triangleq\{{P}_t(y_t|y^{t-1},x_t):\nonumber\\
&~\mathbb{E}\{\rho_t(X_t,Y_t)\}\leq{D_t},~ {\forall t\in\mathbb{N}_0^n}\}\label{initial_distortion_constraint_set}
\end{align}
and $I(X^n\rightarrow Y^n)$ is a variant of directed information (DI) \cite{massey:1990} defined as follows
\begin{align}
    I(X^n&\rightarrow Y^n)\triangleq\sum_{t=0}^n \mathbb{E} \left\{\log \left(\frac{P_t(Y_t|Y^{t-1},X_t)}{P_t(Y_t|Y^{t-1})}\right)\right\}\nonumber\\
    &= \sum_{t=0}^n I(X_t;Y_t|Y^{t-1})\nonumber\\
    &=\sum_{t=0}^n \int_{\mathcal{X}_{t-1}^{t}\times\mathcal{Y}^{t}}\log\left(\frac{P_t(y_t|y^{t-1},x_{t})}{P_t(y_t|y^{t-1})}\right) P_t(x_t|x_{t-1}) 
    \nonumber\\
    &\qquad P_t(y_t|y^{t-1},x_{t}){P}_{t}(x_{t-1}|y^{t-1})P_{t}(y^{t-1}) \nonumber\\
    &\stackrel{(a)}\equiv\mathbb{I}_{[0,n]}(\overleftarrow{P}_{[0,n]}(x^n),\overrightarrow{P}_{[0,n]}(y^n|x^n))
    \label{DI-func-dependence}
\end{align}
where $(a)$ denotes the functional dependence of $I(X^n\rightarrow Y^n)$ with respect to (wrt) $\overleftarrow{P}_{[0,n]}(x^n)\triangleq\otimes_{t=0}^nP_t(x_t|x_{t-1})$ and $\overrightarrow{P}_{[0,n]}(y^n|x^n)\triangleq\otimes_{t=0}^n{P}_t(y_t|y^{t-1},x_t)$, respectively.

We state some noteworthy remarks related to \eqref{NRDF}.

\begin{remark} (On the bound of \eqref{NRDF})
\label{remark:1}
{\bf (i)} In accordance with the system model in Fig. \ref{fig:l-sc}, \eqref{NRDF} does not assume any prior information at $t=0$ available in the system model but only $P_0(x_0)$ and $P_0(y_0)$ are fixed. This means that at $t=0$, $I(X_0;Y_0|Y^{-1})\equiv{I(X_0;Y_0)}$. {\bf (ii)} The bound of \eqref{NRDF} has been thoroughly studied for continuous sources, see e.g.,  \cite{tanaka:2017,stavrou:2018siam,kostina:2019a,charalambous:2022} but not adequately for discrete sources, with a notable exception perhaps the work of \cite{stavrou:2015} which alas does not provide general methodologies for computation or tangible proofs to certain analytical expressions. {\bf (iii)} Although \eqref{NRDF} under the constraint set \eqref{initial_distortion_constraint_set} is expected to be a convex program, it is not. Specifically, \eqref{NRDF} is a convex program when $I(X^n\rightarrow{Y^n})$ is a convex function wrt the product $\overrightarrow{P}_{[0,n]}(y^n|x^n)\triangleq\otimes_{t=0}^n P_t(y_t|y^{t-1},x_t)$ for a fixed $\overleftarrow{P}_{[0,n]}(x^n)\triangleq\otimes_{t=0}^n P_t(x_t|x_{t-1})$ for a constraint set defined as 
\begin{align}
&\overrightarrow{\mathcal{Q}}_{[0,n]}(D_0,\ldots,D_n)\triangleq\{\overrightarrow{P}(y^n|x^n)\triangleq\otimes_{t=0}^nP_t(y_t|y^{t-1},x_t):\nonumber\\
&\mathbb{E}\{\rho_t(X_t,Y_t)\}\leq{D}_t,~\forall{t\in\mathbb{N}_0^n}\}\label{distortion_set_product_space}
\end{align}
(see e.g., \cite{charalambous:2016}). Hence, {\it no proof of the convexity of \eqref{NRDF} exists} wrt the constraint set \eqref{initial_distortion_constraint_set}.  {\bf (iv)} There is no generic implicit or explicit solution of \eqref{NRDF} under the constraint set of \eqref{initial_distortion_constraint_set}. However, implicit closed-form recursions of the optimal minimizer are reported (without a complete proof) in \cite[Theorem 4.1]{stavrou:2018siam} assuming the constraint set in \eqref{distortion_set_product_space}.
\end{remark} 

In this work, we provide a generic methodology to approximate \eqref{NRDF} assuming discrete Markov sources aiming primarily to close the gaps mentioned in Remark \ref{remark:1}, {\bf (iii)}, {\bf (iv)}.

\section{Main Results}
\label{section:main-results}

In this section, we give our main results. To do it, we restrict ourselves to finite alphabet spaces, e.g., with cardinality $|\mathcal{X}_t|<\infty$, $|\mathcal{Y}_t|<\infty,~\forall{t}$, throughout the paper.

First, we give a new information structure that simplifies the multi-letter variant of DI in \eqref{NRDF}.
\begin{lemma}{(Structural Property)}
\label{lemma:information-structure}
For a given Markov source $\{P_t(x_t|x_{t-1}):~t\in\mathbb{N}_0^n\}$ and a single letter distortion function $\{\rho_t(x_t,y_t):~t\in\mathbb{N}_0^n\}$, the characterization in \eqref{NRDF} can be simplified as follows\footnote{For non-empty finite sets, we can replace infimum with minimum due to the compactness of the constraint sets.}
\begin{align}
    R_{[0,n]}^{na}(D_0,\ldots,D_n) \triangleq \min_{\tilde{\mathcal{Q}}_{[0,n]}(D_0,\ldots,D_n)} I(X^n\rightarrow Y^n),
\label{NRDF-simplified}
\end{align}
where 
\begin{align}
    \tilde{\mathcal{Q}}_{[0,n]}(D_0,&\ldots,D_n)\triangleq \Big\{P_t(y_t|y_{t-1},x_t): \nonumber\\
    &\mathbb{E}\{\rho_t(X_t,Y_t)\} \leq D_t,~\forall t\in\mathbb{N}_0^n\Big\}
    \label{DistortionSet-new}\\
    I(X^n\rightarrow Y^n) \triangleq& \sum_{t=0}^n I(X_t;Y_t|Y_{t-1}) \label{CondMI-3-letter}.
\end{align}
\end{lemma}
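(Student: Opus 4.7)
The plan is to prove the equality in \eqref{NRDF-simplified} by establishing two inequalities, one via embedding and the other via a marginalization construction.

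For $R_{[0,n]}^{na}(D_0,\ldots,D_n)\le\min_{\tilde{\mathcal{Q}}_{[0,n]}}\sum_{t=0}^n I(X_t;Y_t|Y_{t-1})$, I would embed any Markov kernel $P_t(y_t|y_{t-1},x_t)\in\tilde{\mathcal{Q}}_{[0,n]}$ into $\mathcal{Q}_{[0,n]}$ as a kernel that is constant in $y^{t-2}$. Under such a kernel, the conditional independence $Y_t\perp Y^{t-2}\mid(Y_{t-1},X_t)$ forces $I(Y^{t-2};Y_t|Y_{t-1},X_t)=0$, so expanding $I(X_t,Y^{t-2};Y_t|Y_{t-1})$ via the chain rule in two ways gives the pointwise identity $I(X_t;Y_t|Y_{t-1})=I(X_t;Y_t|Y^{t-1})+I(Y^{t-2};Y_t|Y_{t-1})\ge I(X_t;Y_t|Y^{t-1})$. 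Summing over $t$ and taking the minimum over $\tilde{\mathcal{Q}}_{[0,n]}$ then gives the bound, since $\mathcal{Q}_{[0,n]}\supseteq\tilde{\mathcal{Q}}_{[0,n]}$.

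For the reverse direction, given any $P_t(y_t|y^{t-1},x_t)\in\mathcal{Q}_{[0,n]}$ with induced joint $P$, I would define the candidate Markov kernel by marginalization, $\tilde{P}_t(y_t|y_{t-1},x_t)\triangleq P(y_t|y_{t-1},x_t)$. Using the Markov source property $P_t(x_t|x_{t-1},y^{t-1})=P_t(x_t|x_{t-1})$, an induction on $t$ shows that the joint $\tilde{P}$ produced by $\tilde{P}_t$ agrees with $P$ on every three-letter marginal $(X_t,Y_{t-1},Y_t)$. This preserves the per-stage distortions, placing $\tilde{P}\in\tilde{\mathcal{Q}}_{[0,n]}$, and yields $I_{\tilde{P}}(X_t;Y_t|Y_{t-1})=I_P(X_t;Y_t|Y_{t-1})$ for every $t$.

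The remaining step is to prove $\sum_{t=0}^n I_P(X_t;Y_t|Y_{t-1})\le I_P(X^n\to Y^n)$ for an arbitrary $P\in\mathcal{Q}_{[0,n]}$ with Markov source. I would first establish the auxiliary identity $I_P(X^n\to Y^n)=I_P(X^n;Y^n)$, which follows because the kernel structure together with the Markov source enforces $Y_t\perp X^n\setminus\{X_t\}\mid(Y^{t-1},X_t)$, so $H(Y^n|X^n)=\sum_t H(Y_t|Y^{t-1},X_t)$. The target inequality then reduces to $\sum_t I_P(X_t;Y_t|Y_{t-1})\le I_P(X^n;Y^n)$, which I would attack by telescoping the conditional-entropy differences $\sum_t[H(Y_t|Y_{t-1})-H(Y_t|Y^{t-1})]$ and $\sum_t[H(Y_t|Y_{t-1},X_t)-H(Y_t|Y^{t-1},X_t)]$ while invoking the Markov chain $Y^{t-2}-(X_{t-1},Y_{t-1})-X_t$ inherited from the source. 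The main obstacle is precisely this last step: the termwise bound $I(X_t;Y_t|Y_{t-1})\le I(X_t;Y_t|Y^{t-1})$ has no uniform sign in general, so the argument must be global and must genuinely exploit the Markov source hypothesis rather than reducing to a routine chain-rule manipulation.
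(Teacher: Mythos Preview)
Your embedding argument for $R_{[0,n]}^{na}\le\min_{\tilde{\mathcal{Q}}_{[0,n]}}\sum_t I(X_t;Y_t|Y_{t-1})$ is correct and matches the paper's easy direction. The marginal-preservation induction you sketch is also sound: the $\tilde P$ defined by $\tilde P_t(y_t|y_{t-1},x_t)\triangleq P(y_t|y_{t-1},x_t)$ does reproduce every $(X_t,Y_{t-1},Y_t)$ marginal of $P$.

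The gap is step~5, and it is not a missing detail but a false target: the global inequality $\sum_t I_P(X_t;Y_t|Y_{t-1})\le I_P(X^n\to Y^n)$ fails for admissible $P\in\mathcal{Q}_{[0,n]}$ with a Markov source. Take $n=2$, $X_0$ uniform on $\{0,1\}$ with $X_1=X_2=X_0$, and kernels $Y_0=X_0$, $Y_1$ a fair coin independent of $(Y_0,X_1)$, and $P_2(y_2|y^1,x_2)$ the deterministic map $Y_2=Y_0$. Then $I(X_0;Y_0)=1$, $I(X_1;Y_1|Y_0)=0$, and $I(X_2;Y_2|Y_1)=1$ (because $Y_2=X_0=X_2$ and $Y_1$ is independent), whereas $I(X_2;Y_2|Y^1)=0$ (because $Y_2$ is determined by $Y_0$). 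Hence $\sum_t I_P(X_t;Y_t|Y_{t-1})=2>1=I_P(X^2\to Y^2)$; a small perturbation of the source transitions preserves the strict inequality. Moreover, for this $P$ your marginalized channel at $t=2$ is the deterministic map $Y_2=X_2$, which regenerates exactly the same joint, so your $\tilde P$ lands at cost $2$, not $\le 1$. No telescoping of conditional entropies can repair this, since the inequality you are aiming at is simply not valid.

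The paper obtains the hard direction $R_{[0,n]}^{na}\ge\min_{\tilde{\mathcal{Q}}_{[0,n]}}\sum_t I(X_t;Y_t|Y_{t-1})$ by a different mechanism: it invokes a sequential variational equality for directed information \cite[Theorem~19, {\bf Part B}, {\bf (ii)}]{charalambous:2016} to lower-bound $I_P(X^n\to Y^n)$, and then minimizes that bound over the restricted family $\tilde{\mathcal{Q}}_{[0,n]}$. That route never compares $I_P(X_t;Y_t|Y_{t-1})$ to $I_P(X_t;Y_t|Y^{t-1})$ under a fixed $P$, and so sidesteps the obstruction above.
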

\begin{proof}
We outline the proof due to space limitations. The goal is to show that for the specific class of sources and single-letter distortion, then via \eqref{NRDF} we can obtain a lower bound which is achievable. To obtain the lower bound, we make use of a sequential version of variational equalities of DI derived in \cite[Theorem 19, {\bf Part B}, {\bf (ii)}]{charalambous:2016} and then further minimize with respect to the constraint set in \eqref{DistortionSet-new}. The upper bound can be trivially obtained upon observing that ${\cal Q}_{[0,n]}(D_0,\ldots,D_n)\subset\tilde{\cal Q}_{[0,n]}(D_0,\ldots,D_n)$ in \eqref{NRDF} which also implies \eqref{CondMI-3-letter}.
\end{proof}
The result in Lemma \ref{lemma:information-structure} is generic and holds for both discrete and continuous alphabets. Indeed, such property is already verified for jointly Gaussian-Markov processes, e.g., \cite{stavrou:2018siam,gorbunov:1972b}.

The next two results provide conditions to ensure new convexity properties for the expression in \eqref{NRDF-simplified}. The first result demonstrates a new convexity property of \eqref{CondMI-3-letter} for a given posterior distribution $\{P_t(x_{t-1}|y_{t-1})\equiv{P}_t^o(x_{t-1}|y_{t-1}):~t\in\mathbb{N}_0^n\}$, wrt the sequence of minimizing distributions $\{P_t(y_t|y_{t-1},x_t):~t\in\mathbb{N}_0^n\}$.

\begin{theorem}{(Convexity of \eqref{CondMI-3-letter})}
\label{theorem:sequential-convexity}
For a fixed source distribution $\{P_t(x_t|x_{t-1}):~t\in\mathbb{N}_0^n\}$, and a given posterior distribution $\{P_t(x_{t-1}|y_{t-1})\equiv{P}_t^o(x_{t-1}|y_{t-1}):~t\in\mathbb{N}_0^n\}$ obtained for a fixed $Y_{t-1}=y_{t-1}$, define the conditional mutual information $I(X_t;Y_t|Y_{t-1}=y_{t-1})$ as follows
\begin{align}
    I(X_t;&Y_t|Y_{t-1}=y_{t-1})
    \triangleq \sum_{x_{t-1}\in\mathcal{X}_{t-1}}\bigg(\sum_{x_t\in\mathcal{X}_t,y_t\in\mathcal{Y}_t}\nonumber
    \\ &\log\left(\frac{P_t(y_t|y_{t-1},x_{t})}{P_t(y_t|y_{t-1})}\right)P_t(x_t|x_{t-1}) P_t(y_t|y_{t-1},x_{t})\bigg) \nonumber\\ & P^o_{t}(x_{t-1}|y_{t-1}),~\forall{t\in\mathbb{N}_0^n}. 
\label{CondMI-stagewise}
\end{align}
Then, \eqref{CondMI-stagewise} is a convex functional wrt $\{P_t(y_t|y_{t-1},x_{t}):~t\in\mathbb{N}_0^n\}$. Moreover,  the additive term
\begin{align}
    I(X^n\rightarrow Y^n)=\sum_{t=0}^n\sum_{\substack{y_{t-1}\\
    \in\mathcal{Y}_{t-1}}}P_t(y_{t-1})I(X_t;Y_t|Y_{t-1}=y_{t-1})
\label{CondMI-fixed-y_pre}
\end{align}
is also convex wrt $\{P_t(y_t|y_{t-1},x_{t}):t\in\mathbb{N}_0^n\}$.
\end{theorem}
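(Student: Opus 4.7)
My plan is to recast \eqref{CondMI-stagewise} as a standard mutual information $I(X;Y)$ of a channel with a fixed input distribution, and then invoke the classical convexity of $I(X;Y)$ in the channel together with preservation of convexity under non-negative linear combinations of functions of disjoint variable blocks.

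First, for a fixed $t$ and fixed realization $y_{t-1}$, I would collapse the inner sum over $x_{t-1}$ by introducing the ``effective input''
\begin{equation*}
\bar{P}_t(x_t|y_{t-1}) \triangleq \sum_{x_{t-1}\in\mathcal{X}_{t-1}} P_t(x_t|x_{t-1})\, P^o_t(x_{t-1}|y_{t-1}),
\end{equation*}
which is a fixed probability vector on $\mathcal{X}_t$ by the hypotheses of the theorem. Writing $Q(y_t|x_t) := P_t(y_t|y_{t-1},x_t)$ for the free variable and noting that the induced output marginal $P_t(y_t|y_{t-1}) = \sum_{x_t} Q(y_t|x_t)\,\bar{P}_t(x_t|y_{t-1})$ is linear in $Q$, a direct rearrangement shows that \eqref{CondMI-stagewise} equals the standard $I(X_t;Y_t)$ under the joint law $\bar{P}_t(x_t|y_{t-1})\,Q(y_t|x_t)$, with the input fixed and the channel $Q$ variable.

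Second, I would apply the log-sum inequality to prove convexity of this mutual information in $Q$. For $Q_\lambda = \lambda Q_1 + (1-\lambda)Q_2$ with $\lambda\in[0,1]$, the corresponding marginal $P_\lambda(y_t|y_{t-1})$ is the $\lambda$-combination of the $P_i(y_t|y_{t-1})$, and log-sum applied to the pairs $(\lambda Q_1, \lambda P_1)$ and $((1-\lambda)Q_2, (1-\lambda)P_2)$ yields
\begin{equation*}
Q_\lambda(y_t|x_t)\log\frac{Q_\lambda(y_t|x_t)}{P_\lambda(y_t|y_{t-1})} \leq \lambda\, Q_1(y_t|x_t)\log\frac{Q_1(y_t|x_t)}{P_1(y_t|y_{t-1})} + (1-\lambda)\, Q_2(y_t|x_t)\log\frac{Q_2(y_t|x_t)}{P_2(y_t|y_{t-1})}.
\end{equation*}
Summing over $y_t$ and averaging against the fixed $\bar{P}_t(\cdot|y_{t-1})$ delivers the convexity of $I(X_t;Y_t|Y_{t-1}=y_{t-1})$ in $\{P_t(y_t|y_{t-1},x_t)\}_{x_t,y_t}$, establishing the first claim.

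Finally, for the additive term in \eqref{CondMI-fixed-y_pre}, I would observe that under the theorem's hypotheses the marginals $\{P_t(y_{t-1})\}$ are also fixed, and that the ``slices'' $\{P_t(y_t|y_{t-1},x_t)\}_{x_t,y_t}$ indexed by distinct pairs $(t, y_{t-1})$ occupy disjoint coordinate blocks of the overall decision variable. Each summand $P_t(y_{t-1})\,I(X_t;Y_t|Y_{t-1}=y_{t-1})$ is then a non-negative multiple of a function convex in its own block, and a sum of convex functions of disjoint variable groups is jointly convex in the aggregate. The main difficulty is conceptual rather than technical: one must justify treating $P^o_t(x_{t-1}|y_{t-1})$ and $P_t(y_{t-1})$ as genuinely frozen parameters, since in principle they are determined by earlier-stage channels $\{P_s(y_s|y_{s-1},x_s)\}_{s<t}$. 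This freezing is exactly the stage-wise decoupling required for the DP formulation announced in Section \ref{section:main-results}, where past-induced quantities are absorbed into the belief state and convexity is required only in the current-stage decision.
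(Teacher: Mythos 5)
Your proposal is correct and follows essentially the same route as the paper's proof: per-stage convexity via the log-sum inequality (your collapse to the effective input $\bar{P}_t(x_t|y_{t-1})$ is precisely the quantity the paper later denotes $P_t^{\text{M}o}(x_t|y_{t-1})$), followed by the convexity-preserving rule for the non-negative weighting by $P_t(y_{t-1})$ and the sum over $t$. Your closing caveat about treating $P_t^o(x_{t-1}|y_{t-1})$ and $P_t(y_{t-1})$ as frozen parameters matches how the theorem is stated and used in the paper.
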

\begin{proof}
We outline the proof due to space limitations. Under the conditions of the theorem, we can show convexity of \eqref{CondMI-stagewise} wrt $\{P_t(y_t|y_{t-1},x_{t}):t\in\mathbb{N}_0^n\}$ at each $t$ by applying repetitively the log-sum inequality \cite{cover-thomas:2006}. To show convexity of \eqref{CondMI-3-letter} wrt $\{P_t(y_t|y_{t-1},x_{t}):t\in\mathbb{N}_0^n\}$ we take the sum of all \eqref{CondMI-stagewise} averaged wrt the non-negative $P_t(y_{t-1})$ and establish the result since the convexity-preserving rule \cite{boyd:2004} holds.
\end{proof}

The following result establishes the convexity of the constraint set in \eqref{DistortionSet-new} for a given posterior distribution $\{P_t(x_{t-1}|y_{t-1})\equiv{P}_t^o(x_{t-1}|y_{t-1}):~t\in\mathbb{N}_0^n\}$.

\begin{theorem}{(Convexity of \eqref{DistortionSet-new})}
\label{theorem:convexity-set}
For a fixed source distribution $\{P_t(x_t|x_{t-1}):~t\in\mathbb{N}_0^n\}$, and a given posterior distribution $\{P_t(x_{t-1}|y_{t-1})\equiv{P}_t^o(x_{t-1}|y_{t-1}):~t\in\mathbb{N}_0^n\}$ obtained for a fixed $Y_{t-1}=y_{t-1}$, define the constraint set
\begin{align}
    &\tilde{Q}_{[0,n]}(D_0,D_1[y_{0}, P_1^o],\ldots,D_n[y_{n-1}, P_n^o])\triangleq\nonumber\\
    &\Big\{P_t(y_t|y_{t-1},x_t), t\in\mathbb{N}_0^n:~\mathbb{E}\Big\{\rho_0(X_0,Y_0)\Big\}\leq D_0;\nonumber\\
    &\mathbb{E}^{P_t^o}\Big\{\rho_t(X_t,Y_t){\Big|} 
    Y_{t-1}=y_{t-1}\Big\} \leq D_t[y_{t-1},P_t^o],~t\in\mathbb{N}_1^n\Big\}, 
\label{DistortionSet-fixed-parameters}  
\end{align}
where $D_t[y_{t-1},P_t^o]$ denotes the functional dependence of $D_t$ wrt the given $\{P_t(x_{t-1}|y_{t-1})\equiv{P}_t^o(x_{t-1}|y_{t-1}):~t\in\mathbb{N}_0^n\}$ obtained for a fixed $Y_{t-1}=y_{t-1}$. Then, \eqref{DistortionSet-fixed-parameters} forms a convex set. Moreover, if we average the constraints in \eqref{DistortionSet-fixed-parameters} wrt $y_{t-1}\in{\mathcal{Y}}_{t-1}$ for each $t\in\mathbb{N}_1^n$,
\eqref{DistortionSet-new} is also convex for a given $\{P_t(x_{t-1}|y_{t-1}) \equiv P_t^o(x_{t-1}|y_{t-1}),~t\in\mathbb{N}_0^n\}$.  
\end{theorem}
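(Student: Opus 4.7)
The plan is to exploit the affine (linear) structure of the conditional distortion expectation in the decision variable $P_t(y_t|y_{t-1},x_t)$, once both the source kernel $P_t(x_t|x_{t-1})$ and the posterior $P_t^o(x_{t-1}|y_{t-1})$ are held fixed. First I would expand, for each $t\in\mathbb{N}_1^n$ and each fixed $y_{t-1}\in\mathcal{Y}_{t-1}$,
\begin{align*}
\mathbb{E}^{P_t^o}\!\left\{\rho_t(X_t,Y_t)\big|Y_{t-1}=y_{t-1}\right\}
= \sum_{x_t,y_t}\rho_t(x_t,y_t)\,P_t(y_t|y_{t-1},x_t)\,\alpha_t(x_t;y_{t-1}),
\end{align*}
where $\alpha_t(x_t;y_{t-1})\triangleq\sum_{x_{t-1}}P_t(x_t|x_{t-1})P_t^o(x_{t-1}|y_{t-1})$ is a fixed nonnegative coefficient under the hypotheses of the theorem. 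This makes the left-hand side an affine functional of $P_t(y_t|y_{t-1},x_t)$. The $t=0$ inequality is likewise affine in $P_0(y_0|x_0)$ (multiplied by the fixed $P_0(x_0)$).

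Second, I would assemble the constraint set. Each inequality $\mathbb{E}^{P_t^o}\{\rho_t(X_t,Y_t)|Y_{t-1}=y_{t-1}\}\le D_t[y_{t-1},P_t^o]$ defines a closed half-space in the finite-dimensional space of kernels $\{P_t(y_t|y_{t-1},x_t)\}$, and the probability-simplex conditions ($P_t(\cdot|y_{t-1},x_t)\ge 0$ and $\sum_{y_t}P_t(y_t|y_{t-1},x_t)=1$) are themselves convex. Since $\tilde Q_{[0,n]}(D_0,D_1[\cdot],\ldots,D_n[\cdot])$ is (up to a Cartesian product over $t$) a finite intersection of half-spaces with the simplex, it is convex by the standard convexity-preserving rules \cite{boyd:2004}.

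Third, for the second part I would average the pointwise constraints using the tower property,
\begin{align*}
\mathbb{E}\{\rho_t(X_t,Y_t)\} = \sum_{y_{t-1}\in\mathcal{Y}_{t-1}} P_t(y_{t-1})\,\mathbb{E}^{P_t^o}\!\left\{\rho_t(X_t,Y_t)\big|Y_{t-1}=y_{t-1}\right\},
\end{align*}
and observe that for a given posterior $\{P_t^o(x_{t-1}|y_{t-1})\}$ the marginal $P_t(y_{t-1})$ is also fixed (it is inherited from the joint obtained at the previous stage of the DP and does not depend on the current decision $P_t(y_t|y_{t-1},x_t)$). Hence the right-hand side is a nonnegative linear combination of affine functionals of the current kernels, and is itself affine in the full collection $\{P_t(y_t|y_{t-1},x_t)\}_{y_{t-1}}$. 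The averaged distortion inequality therefore defines a single half-space, and intersecting across $t\in\mathbb{N}_0^n$ with the simplex constraints yields the convex set in \eqref{DistortionSet-new}.

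The only delicate point, and the place where the hypothesis \textit{for a given} $P_t^o$ is indispensable, is the claim that $P_t(y_{t-1})$ in the averaging step can be treated as a constant rather than as a functional of the kernels under optimization. I would therefore emphasize that in the DP formulation the joint $P_t(x_{t-1},y_{t-1})=P_t^o(x_{t-1}|y_{t-1})P_t(y_{t-1})$ is supplied by the preceding stage, so freezing $P_t^o$ freezes $P_t(y_{t-1})$ as well; without this step the averaged constraint would be bilinear in the decision variables and convexity would fail. Everything else is a straightforward verification that affine inequalities cut out convex sets.
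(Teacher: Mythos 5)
Your proof is correct and follows essentially the same route as the paper's: for a fixed source kernel and fixed posterior $P_t^o(x_{t-1}|y_{t-1})$, each per-stage distortion constraint is affine (a half-space) in the kernel $P_t(y_t|y_{t-1},x_t)$, so the per-stage sets are convex, and the averaged constraint in \eqref{DistortionSet-new} inherits convexity because the weights $P_t(y_{t-1})>0$ are treated as fixed. Your version is in fact slightly cleaner on two points the paper glosses over: you correctly describe the overall set as a Cartesian product/intersection of convex sets rather than a ``union of disjoint sets,'' and you explicitly flag that freezing $P_t^o$ (and the preceding stage's joint) is what lets $P_t(y_{t-1})$ be treated as a constant, without which the averaged constraint would be bilinear.
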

\begin{proof}
We sketch the proof due to space limitations. The conditions of the theorem ensure that \eqref{DistortionSet-fixed-parameters} is the union of multiple disjoint sets, i.e., $\tilde{Q}_0(D_0)=\{P(y_0|x_0):\mathbb{E}\{\rho_0(X_0,Y_0)\}\leq D_0\}$ and  $\tilde{Q}_t(D_t[y_{t-1},P_t^o])=\{P_t(y_t|y_{t-1},x_t):~\mathbb{E}^{P_t^o}\{\rho_t(X_t,Y_t)|Y_{t-1}=y_{t-1}\}\leq D_t[y_{t-1},P_t^o]\},~\forall{t\in\mathbb{N}_1^n}$. The convexity of each disjoint set is shown by proving that two elements in the set form a nonnegative linear combination. The convexity of \eqref{DistortionSet-new} for a given $\{P_t(x_{t-1}|y_{t-1}) \equiv P_t^o(x_{t-1}|y_{t-1}),~t\in\mathbb{N}_0^n\}$ can be proved by invoking again the convexity-preserving rule \cite{boyd:2004}. Particularly, averaging each convex disjoint set with respect to $P_t(y_{t-1}) > 0$ and summing the convex disjoint sets will preserve convexity.
\end{proof}
Using Theorems \ref{theorem:sequential-convexity}, \ref{theorem:convexity-set}, we can exploit Lagrange duality theorem \cite{boyd:2004} with Lagrange multipliers $\{s_t\leq0,~t\in\mathbb{N}_0^n\}$, to cast \eqref{NRDF-simplified} for a given $\{P_t(x_{t-1}|y_{t-1})\equiv P_t^o(x_{t-1}|y_{t-1}),~t\in\mathbb{N}_0^n\}$ obtained for a fixed $Y_{t-1}=y_{t-1}$, into the following unconstrained convex optimization problem:
\begin{align}
    &R^{un}_{[0,n]}(D_0,D_1[y_{0},P_0^o],\ldots,D_n[y_{n-1},P_n^o]) =\sup_{\{s_t\leq0:~t\in\mathbb{N}_0^n\}}\nonumber\\
    &\min_{\{P_t(y_t|y_{t-1},x_t):~t\in\mathbb{N}_0^n\}} \sum_{t=0}^n \mathbb{E}^{P_t^o} \Bigg\{ \log \left( \frac{P_t(Y_t|Y_{t-1},X_t)}{P_t(Y_t|Y_{t-1})}\right)\nonumber\\
    &\qquad\qquad-s_t \big(\rho_t(X_t,Y_t)-D_t[y_{t-1},P_t^o]\big)\Bigg{|}Y_{t-1}=y_{t-1} \Bigg\}.
\label{NRDF-Lagrange-dual_fixed_parameters}
\end{align}
Note that if in \eqref{NRDF-Lagrange-dual_fixed_parameters}, we average wrt to $P_t(y_{t-1})>0$, then, we can approximate \eqref{NRDF-simplified} for a given $\{P_t(x_{t-1}|y_{t-1})\equiv P_t^o(x_{t-1}|y_{t-1}),~t\in\mathbb{N}_0^n\}$. Moreover, if our search space is sufficiently large that can cover the vast majority of possible values of $\{P_t(x_{t-1}|y_{t-1})\equiv P_t^o(x_{t-1}|y_{t-1}),~t\in\mathbb{N}_0^n\}$, we can theoretically approach near-optimally \eqref{NRDF-simplified}.
\par Note that \eqref{NRDF-Lagrange-dual_fixed_parameters} can be reformulated using stochastic optimal control arguments as an {\it unconstrained partially observable finite-time horizon stochastic DP algorithm with a continuous state space} \cite{bertsekas:2005}. Particularly, the state of the stochastic optimal control problem is a generalization of the classical state called {\it belief} or {\it information state} and is given by the probabilistic state $\{P_t(x_{t-1}|y_{t-1}):~t\in\mathbb{N}_0^n\}$ (assuming $Y_{t-1}=y_{t-1},~\forall{t}$) with elements that explore all possible values between the interval $(0,1)$. Therefore, there are infinitely many choices for the elements of the probabilistic state hence the term continuous state space. To formalize this in an example, suppose that we are given ${\cal X}_t={\cal Y}_t={\cal X}=\{0,1\},~\forall{t}$. Then, a possible choice of the belief state $P_t(x_{t-1}|y_{t-1})$ would be to create a $2\times{2}$ stochastic matrix shown in Table \ref{table:belief-matrix},
where $(\alpha_t^{00},\alpha_t^{01})$ for each $t$, should explore all possible values in the interval $(0,1)$. 
The \textit{feedback control law or policy} is the minimizing distributions $\{P_t(y_t|y_{t-1},x_t):~t\in\mathbb{N}_0^n\}$, the \textit{random disturbance} refers to the source distributions $\{P_t(x_t|x_{t-1}):~t\in\mathbb{N}_0^n\}$ whereas the \textit{cost function} is the quantity $\log\left( \frac{P_t(y_t|y_{t-1},x_t)}{P_t(y_t|y_{t-1})}\right)-s_t\rho_t(x_t,y_t)$. A summary of the above in the context of stochastic optimal control problem with their one-to-one correspondence in \eqref{NRDF-Lagrange-dual_fixed_parameters} is illustrated in Table \ref{table:reformulated-variables}.
\begin{table}
\centering
\caption{$P_t(x_{t-1}|y_{t-1})$}
\begin{tabular}{c|cc}
    \hline
     & $y_{t-1} = 0$ & $y_{t-1} = 1$\\\hline
     $x_{t-1} = 0$ & $1 - \alpha_t^{00}$ & $1 - \alpha_t^{01}$\\
    $ x_{t-1} = 1 $ & $\alpha_t^{00}$ & $\alpha_t^{01}$ \\\hline
    \end{tabular}
\label{table:belief-matrix}
\end{table}
\begin{table}
\centering
\caption{Stochastic optimal control problem}
\begin{tabular}{lcc}
    \hline
    Variables of DP&
    Connection to \eqref{NRDF-Lagrange-dual_fixed_parameters}\\  \hline
    belief or information state & $P_t(x_{t-1}|y_{t-1})$ \\
    disturbance & $P_t(x_t|x_{t-1})$ \\
    control policy & $P_t(y_t|y_{t-1},x_t)$ \\
    cost function & $\log\left( \frac{P_t(y_t|y_{t-1},x_t)}{P_t(y_t|y_{t-1})}\right)-s_t\rho_t(x_t,y_t)$ \\
    \hline
\label{table:reformulated-variables}
\end{tabular}
\end{table}

In the sequel, we let $R_t(\cdot)$ denote the optimal expected cost or pay-off in \eqref{NRDF-Lagrange-dual_fixed_parameters} on the future time horizon $\{t,t+1,\ldots,n\}$. Then, for a given belief state $P_t(x_{t-1}|y_{t-1})=P_t^o(x_{t-1}|y_{t-1})$ obtained for a fixed $Y_{t-1}=y_{t-1}$, this quantity is described as follows
\begin{align*}
    &R_t(D_t[y_{t-1},P_t^o]) \\&= \min_{\substack{\{P_i(y_i|y_{i-1},x_i):\\i\in\{t,t+1,\ldots,n\}\}}} \mathbb{E}^{P_i^o}\Bigg\{\sum_{i=t}^n\log\left( \frac{P_i(Y_i|Y_{i-1},X_i)}{P_i(Y_i|Y_{i-1})}\right) \\
    &- \sum_{i=t}^n s_i\bigg(\rho_i(X_i,Y_i)-D_i[y_{i-1},P_i^o]\bigg)\Bigg|{Y_{i-1}=y_{i-1}}\Bigg\}.
\end{align*}
Applying the \textit{principle of optimality} \cite{bertsekas:2005} yields the following finite-time horizon stochastic DP recursions obtained backward in time
\begin{align}
&R_n(D_n[y_{n-1},P_n^o]) =\min_{P_n(y_n|y_{n-1},x_n)}\sum_{x_{n-1}^n\in\mathcal{X}_{n-1}^n,y_n\in\mathcal{Y}_n} \nonumber\\
&\Bigg\{\Bigg(\log\left( \frac{P_n(y_n|y_{n-1},x_n)}{P_n(y_n|y_{n-1})}\right)- s_n \rho_n(x_n,y_n)\Bigg) P_n(x_n|x_{n-1})\nonumber
\end{align}
\begin{equation}
P_n(y_n|y_{n-1},x_n)P_n^o(x_{n-1}|y_{n-1}) \Bigg\}+s_n D_n[y_{n-1}, P_n^o]\label{DP-time-n}
\end{equation}
\begin{align}
    &R_t(D_t[y_{t-1},P_t^o])=\min_{P_t(y_t|y_{t-1},x_t)} \sum_{x_{t-1}^t\in\mathcal{X}_{t-1}^t,y_t\in\mathcal{Y}_t}\nonumber\\
    & \Bigg\{\Bigg(\log\left( \frac{P_t(y_t|y_{t-1},x_t)}{P_t(y_t|y_{t-1})}\right)- s_t \rho_t(x_t,y_t) \nonumber\\
    &+R_{t+1}(D_{t+1}[y_t,P_{t+1}^o])\Bigg)P_t(x_t|x_{t-1})P_t(y_t|y_{t-1},x_t)\nonumber\\
    &P_t^o(x_{t-1}|y_{t-1}) \Bigg\}+s_t D_t[y_{t-1},P_t^o], ~t=n-1,n-2,\ldots,0.
\label{DP-time-t}
\end{align}
Note that the given information or belief state for fixed $Y_{t-1}=y_{t-1}$ in \eqref{DP-time-n}, \eqref{DP-time-t} can be identified at each time-stage $t$ by the following recursion
\begin{align}
&P_{t+1}^o(x_t|y_t)=\frac{P_t(x_t, y_t)}{P_t(y_t)}\nonumber\\
&=\frac{\sum_{\substack{y_{t-1}\in{\cal Y}_{t-1}\\ x_{t-1}\in{\cal X}_{t-1}}} P_t(y_t|y_{t-1},x_t) P_t(x_t|x_{t-1}) P^o_{t}(x_{t-1}|y_{t-1})}{\sum_{\substack{y_{t-1}\in{\cal Y}_{t-1}\\ x_{t-1}^t\in{\cal X}_{t-1}^t}} P_t(y_t|y_{t-1},x_t) P_t(x_t|x_{t-1}) P^o_{t}(x_{t-1}|y_{t-1})}\nonumber
\end{align}
which is Markov, conditional on $Y_t, P^o_{t}(x_{t-1}|y_{t-1})$. Moreover, at $t=0$ we adopt the assumptions of our system model in Fig. \ref{fig:l-sc} and assume that $P_0(y_0|y_{-1},x_0)=P_0(y_0|x_0)$, and $P_0(y_0|y_{-1})=P_0(y_0)$ hence the initial time stage in \eqref{DP-time-t} can be modified accordingly. To approximate \eqref{NRDF-simplified} for a given $\{P_t(x_{t-1}|y_{t-1})\equiv P_t^o(x_{t-1}|y_{t-1}):~t\in\mathbb{N}_1^n\}$, we need to move forward in time (online computation) starting from $R_0$ and obtain the clean values of the cost-to-go at each time stage averaging over $y_{t-1}\in{\mathcal{Y}}_{t-1},~\forall{t}\in\mathbb{N}_1^n$ and exploring over all possible values of the belief (information) state space generated by $\{P_t(x_{t-1}|y_{t-1})\equiv P_t^o(x_{t-1}|y_{t-1}):~t\in\mathbb{N}_1^n\}$. 

The partially observable stochastic DP algorithm in \eqref{DP-time-n}, \eqref{DP-time-t} can be solved offline using approximation methods such as discretizing the continuous belief state space \cite{bertsekas:2005}. 
However, instead of following that direction, we can leverage the convexity of our problem and optimize wrt the control policy (test-channel) in order to find a more efficient way to compute the DP recursions via policy evaluation. Indeed, if implicit analytical recursions of the control policy are available, this will enable faster computation of both the cost and the control policy \cite{bertsekas:2005}.

To do it, we first discretize the given continuous belief state space into a finite state space which we denote by $\mathcal{B}_t,\forall t\in\mathbb{N}_0^n$. Then, we rely on a dynamic version of an  \textit{AM scheme} the way it was used to compute the classical RDF for discrete sources \cite{blahut:1982} to generate an offline training algorithm. To implement the dynamic AM scheme, we need the following lemma. 

\begin{lemma}{(Double minimization)}
\label{lemma:double-min}
For any $t\in\mathbb{N}_0^n$, let $s_t\leq0$ and $D_t>0$, then for a fixed Markov source $P_t(x_t|x_{t-1})$,  and a given belief state $\{P_t(x_{t-1}|y_{t-1})=P_t^o(x_{t-1}|y_{t-1})\}\in\mathcal{B}_t$ obtained for a fixed $Y_{t-1}=y_{t-1}$, (\ref{DP-time-n}), (\ref{DP-time-t}) can be expressed as a double minimum as follows
\begin{align}
    &R_t(D_t[y_{t-1},P_t^o]) = \min_{P_t(y_t|y_{t-1},x_t)}\min_{P_t(y_t|y_{t-1})}\sum_{x_{t-1}^t\in\mathcal{X}_{t-1}^t,y_t\in\mathcal{Y}_t}\nonumber\\
    &\Bigg\{\Bigg(\log\left( \frac{P_t(y_t|y_{t-1},x_t)}{P_t(y_t|y_{t-1})}\right)-s_t \rho_t(x_t,y_t)\nonumber\\
    & +R_{t+1}(D_{t+1}[y_t,P_{t+1}^o])\Bigg)
    P_t(y_t|y_{t-1},x_t)P_t(x_t|x_{t-1})\nonumber\\ &P_t^o(x_{t-1}|y_{t-1}) \Bigg\} + s_t D_t[y_{t-1},P_t^o], ~t=n,n-1,\ldots,0
\label{DM-CostToGo}
\end{align}
where $R_{t+1}(D_{t+1}[y_t,P_{t+1}^o])$ is the cost-to-go that is equal to $0$ when $t=n$, and $D_t[y_{t-1},P_t^o]$ is a functional of the distortion level expressed as
\begin{align}
    &D_t[y_{t-1},P_t^o]   \nonumber\\ &=\sum_{\substack{x_t\in\mathcal{X}_t\\y_t\in\mathcal{Y}_t}}P_t^{\text{M}o}(x_t|y_{t-1})P_t^*(y_t|y_{t-1},x_t)\rho_t(x_t,y_t)
\label{distortion-stagewise}
\end{align}
with $P_t^{\text{M}o}(x_t|y_{t-1})=\sum_{x_{t-1}\in\mathcal{X}_{t-1}}P_t(x_t|x_{t-1})P_t^o(x_{t-1}|y_{t-1})$ and $P_t^*(y_t|y_{t-1},x_t)$ is achieving the minimum. Moreover, for a fixed $P_t(y_t|y_{t-1},x_t)$, the right hand side (RHS) of \eqref{DM-CostToGo} is minimized by 
\begin{align}
    P_t(y_t|y_{t-1}) = \sum_{x_t\in\mathcal{X}_t} P_t^{\text{M}o}(x_t|y_{t-1}) P_t(y_t|y_{t-1},x_{t})
\label{AM-output}
\end{align}
whereas for fixed $P_t(y_t|y_{t-1})$, the RHS of (\ref{DM-CostToGo}) is minimized by 
\begin{align}
    P_t(y_t|y_{t-1},x_t) = \frac{P_t(y_t|y_{t-1})A_t[P_{t+1}^o](x_t,y_t,s_t)}{\sum_{y_t\in\mathcal{Y}_t}P_t(y_t|y_{t-1})A_t[P_{t+1}^o](x_t,y_t,s_t)}
\label{AM-policy}
\end{align}
where $A_t[P_{t+1}^o](x_t,y_t,s_t) = e^{s_t\rho_t(x_t,y_t)-R_{t+1}(D_{t+1}[y_t,P_{t+1}^o])}$.
\end{lemma}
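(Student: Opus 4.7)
The proof extends the classical Blahut–Arimoto double-minimization argument \cite{blahut:1982} to the finite-horizon stochastic DP in \eqref{DP-time-n}--\eqref{DP-time-t}, and splits naturally into two steps. First I would introduce an auxiliary distribution $Q_t(y_t|y_{t-1})$ in place of $P_t(y_t|y_{t-1})$ in the objective of \eqref{DM-CostToGo}, and show via Gibbs' inequality that minimizing over $Q_t$ returns the induced marginal \eqref{AM-output} and leaves the DP value unchanged, which establishes the double-minimum representation. Second, with this auxiliary distribution frozen, I would solve a standard convex KKT problem for the policy $P_t(y_t|y_{t-1},x_t)$ to obtain the exponential-family form \eqref{AM-policy}. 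The identity \eqref{distortion-stagewise} follows directly by summing $x_{t-1}$ out of $\mathbb{E}^{P_t^o}\{\rho_t(X_t,Y_t)\mid Y_{t-1}=y_{t-1}\}$ via $P_t^{\text{M}o}(x_t|y_{t-1})=\sum_{x_{t-1}}P_t(x_t|x_{t-1})P_t^o(x_{t-1}|y_{t-1})$, so I would simply state it and move on.

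\textbf{Step 1 (outer minimization, proving \eqref{AM-output}).} Fix the policy $P_t(y_t|y_{t-1},x_t)$ and let $Q_t(y_t|y_{t-1})$ be an arbitrary probability mass function on $\mathcal{Y}_t$. Since $\log Q_t$ does not depend on $x_{t-1}$, carrying the $x_{t-1}$-summation through the $\log$-ratio using $\sum_{x_{t-1}}P_t(x_t|x_{t-1})P_t^o(x_{t-1}|y_{t-1})=P_t^{\text{M}o}(x_t|y_{t-1})$, the only $Q_t$-dependent contribution to the RHS of \eqref{DM-CostToGo} at the given $y_{t-1}$ reduces to
\begin{align*}
-\sum_{x_t,y_t} P_t^{\text{M}o}(x_t|y_{t-1})\,P_t(y_t|y_{t-1},x_t)\,\log Q_t(y_t|y_{t-1}).
\end{align*}
Defining $\bar{P}_t(y_t|y_{t-1})\triangleq\sum_{x_t}P_t^{\text{M}o}(x_t|y_{t-1})P_t(y_t|y_{t-1},x_t)$, the difference of this quantity evaluated at an arbitrary $Q_t$ and at $\bar{P}_t$ is precisely the relative entropy $D\bigl(\bar{P}_t(\cdot|y_{t-1})\,\big\|\,Q_t(\cdot|y_{t-1})\bigr)\geq 0$, with equality iff $Q_t=\bar{P}_t$. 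This simultaneously verifies \eqref{AM-output} and shows that the double-minimum in \eqref{DM-CostToGo} coincides with the single-minimum DP in \eqref{DP-time-n}--\eqref{DP-time-t}.

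\textbf{Step 2 (inner minimization, proving \eqref{AM-policy}).} Freezing $P_t(y_t|y_{t-1})$, the objective of \eqref{DM-CostToGo} is convex in $P_t(y_t|y_{t-1},x_t)$ by Theorem \ref{theorem:sequential-convexity}, subject to the simplex constraints $\sum_{y_t}P_t(y_t|y_{t-1},x_t)=1$ for each $x_t$. Introducing multipliers $\lambda_t(x_t,y_{t-1})$, forming the Lagrangian, and setting $\partial/\partial P_t(y_t|y_{t-1},x_t)$ to zero yields
\begin{align*}
P_t^{\text{M}o}(x_t|y_{t-1})\Bigl[\log\tfrac{P_t(y_t|y_{t-1},x_t)}{P_t(y_t|y_{t-1})}+1-s_t\rho_t(x_t,y_t)+R_{t+1}(D_{t+1}[y_t,P_{t+1}^o])\Bigr]=\lambda_t(x_t,y_{t-1}).
\end{align*}
Exponentiating, isolating the $y_t$-dependent factors into $A_t[P_{t+1}^o](x_t,y_t,s_t)=e^{s_t\rho_t(x_t,y_t)-R_{t+1}(D_{t+1}[y_t,P_{t+1}^o])}$, and eliminating $\lambda_t$ by the simplex constraint produces \eqref{AM-policy}.

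\textbf{Main obstacle.} The delicate point is that $R_{t+1}(D_{t+1}[y_t,P_{t+1}^o])$ depends on $y_t$ through the belief update $P_{t+1}^o(x_t|y_t)$, which is itself a function of the optimization variable $P_t(y_t|y_{t-1},x_t)$; a naive chain rule would pollute the KKT stationarity condition with extra terms. I would resolve this by invoking the belief-state discretization underlying Algorithm \ref{algo:DP-backward}: on the finite grid $\mathcal{B}_{t+1}$, the cost-to-go $R_{t+1}(\cdot)$ is tabulated once per backward sweep and, within the AM iteration at stage $t$, is treated as a fixed function of $y_t$; the updated belief is snapped to the nearest element of $\mathcal{B}_{t+1}$ only at the end of the sweep. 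Under this standard AM-plus-discretization decoupling, the derivations of \eqref{AM-output}--\eqref{AM-policy} above are exact, and monotone descent and convergence of the iterations are deferred to Theorems \ref{theorem:convergence} and \ref{theorem:stopping-criterion}.
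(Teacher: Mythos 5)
Your proposal is correct and follows essentially the same route as the paper: the double-minimum representation and the marginal update \eqref{AM-output} come from the classical Blahut double-minimization (Gibbs/KL) argument, and the policy update \eqref{AM-policy} from a KKT/Lagrangian stationarity computation with $R_{t+1}(D_{t+1}[y_t,P_{t+1}^o])$ treated as a fixed function of $y_t$ given the (discretized) belief, exactly as in the paper's sketch. Your explicit treatment of the belief-update dependence is a useful clarification but does not change the argument.
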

\begin{proof}
We outline the proof due to space limitations. The double minimization at each time stage in \eqref{DM-CostToGo}, follows immediately using the same arguments as in \cite[Theorem 5.2.6]{blahut:1982}. Since the problem is convex, or more generally biconvex if we fix $P_t(y_t|y_{t-1})$ and optimize wrt to $P_t(y_t|y_{t-1},x_t)$ (and vice-versa), we can apply KKT conditions \cite{boyd:2004} at each time stage in \eqref{DM-CostToGo} and obtain \eqref{AM-output} and \eqref{AM-policy}, respectively.
\end{proof}

Clearly, from Lemma \ref{lemma:double-min}, if we substitute \eqref{AM-policy} in \eqref{distortion-stagewise}, we obtain
\begin{align}
    &D_{s_t}[y_{t-1},P_t^o]= \sum_{x_t\in\mathcal{X}_t,y_t\in\mathcal{Y}_t}P_t^{\text{M}o}(x_t|y_{t-1}) \nonumber\\
    &\frac{P_t^*(y_t|y_{t-1})A_t[P_{t+1}^o](x_t,y_t,s_t)}{\sum_{y_t\in\mathcal{Y}_t}P_t^*(y_t|y_{t-1})A_t[P_{t+1}^o](x_t,y_t,s_t)} \rho_t(x_t,y_t).
\label{distortion-point-s}
\end{align}

Lemma \ref{lemma:double-min} provides the parametric model that enables us to utilize a new dynamic version of BAA \cite{blahut:1982} to construct an AM scheme between $P_t(y_t|y_{t-1},x_t)$ and $P_t(y_t|y_{t-1})$ at any $t\in\mathbb{N}_0^n$.
In the sequel, we describe mathematically the convergence of an offline training algorithm that leads to the approximate computation of the optimal control policy by a dynamic AM.

\begin{theorem}{(Offline training algorithm)}
\label{theorem:convergence}
For each $t\in\mathbb{N}_0^n$, consider a fixed source $P_t(x_t|x_{t-1})$, and a given $P_t(x_{t-1}|y_{t-1})\equiv P_t^o(x_{t-1}|y_{t-1})$ obtained for a fixed $Y_{t-1}=y_{t-1}$.
Moreover, for any $t$, let $s_t\leq0$ and $P_t^{(0)}(y_t|y_{t-1})$ be the initial output probability distribution with non-zero components, and let $P_t^{(k+1)}(y_t|y_{t-1})=P_t[P_t^{(k)}(y_t|y_{t-1})](y_t|y_{t-1},x_t)$ and $P_t^{(k+1)}(y_t|y_{t-1},x_t)=P_t[P_t^{(k)}(y_t|y_{t-1})](y_t|y_{t-1})$ be expressed as follows
\begin{align}
    &P_t^{(k+1)}(y_t|y_{t-1},x_t) \nonumber\\
    &= P_t^{(k)}(y_t|y_{t-1})  \frac{A_t[P_{t+1}^o](x_t,y_t,s_t)}{\sum_{y_t\in\mathcal{Y}_t}P_t^{(k)}(y_t|y_{t-1})A_t[P_{t+1}^o](x_t,y_t,s_t)}
    \label{policy-update}
    \\
    &P_t^{(k+1)}(y_t|y_{t-1}) = P_t^{(k)}(y_t|y_{t-1}) \nonumber\\
    &\quad\sum_{x_t\in\mathcal{X}_t}  \frac{P_t^{\text{M}o}(x_t|y_{t-1})A_t[P_{t+1}^o](x_t,y_t,s_t)}{\sum_{y_t\in\mathcal{Y}_t}P_t^{(k)}(y_t|y_{t-1})A_t[P_{t+1}^o](x_t,y_t,s_t)}.
    \label{output-update}
\end{align}
Then as $k\to\infty$, we obtain for any $t$ that 
\begin{align*}
    D_t[y_{t-1},P_t^o,P_t^{(k)}(y_t|y_{t-1},x_t)] &\to D_{s_t}[y_{t-1},P_t^o] \nonumber\\
    \mathbb{I}_t[y_{t-1},P_t^o,P_t^{(k)}(y_t|y_{t-1},x_t)]&\to R_t(D_{s_t}[y_{t-1},P_t^o])
\end{align*}
where $(D_{s_t}[y_{t-1},P_t^o],R_t(D_{s_t}[y_{t-1},P_t^o]))$ denotes a point on the rate-distortion curve parametrized by $s_t$, given $P_t^o(x_{t-1}|y_{t-1})$ obtained for a fixed $Y_{t-1}=y_{t-1}$ whereas $\mathbb{I}_t[y_{t-1},P_t^o,P_t^{(k)}(y_t|y_{t-1},x_t)]$ and $D_t[y_{t-1},P_t^o,P_t^{(k)}(y_t|y_{t-1},x_t)]$ are expressed as
\begin{align}
    &\mathbb{I}_t[y_{t-1},P_t^o,P_t^{(k)}(y_t|y_{t-1},x_t)] = \sum_{x_t\in\mathcal{X}_t,y_t\in\mathcal{Y}_t}P_t^{\text{M}o}(x_t|y_{t-1})\nonumber\\
    &\qquad P_t^{(k)}(y_t|y_{t-1},x_t) 
    \bigg(\log\bigg(\frac{P_t^{(k)}(y_t|y_{t-1},x_t)}{P_t^{(k)}(y_t|y_{t-1})}\bigg)\nonumber\\&\qquad+R_{t+1}(D_{t+1}[y_t,P_{t+1}^o])\bigg)
\label{CostToGo-belief-policy}\\
    &D_t[y_{t-1},P_t^o,P_t^{(k)}(y_t|y_{t-1},x_t)]\nonumber\\&\quad=\sum_{x_t\in\mathcal{X}_t,y_t\in\mathcal{Y}_t}P_t^{\text{M}o}(x_t|y_{t-1})P_t^{(k)}(y_t|y_{t-1},x_t) \rho_t(x_t,y_t).
\label{Distortion-belief-policy}
\end{align}
\end{theorem}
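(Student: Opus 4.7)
The plan is to view the iterations \eqref{policy-update}--\eqref{output-update} as alternating minimization of the biconvex stage-wise functional appearing inside the double minimum in \eqref{DM-CostToGo}, and then to adapt the classical Blahut--Arimoto convergence argument of \cite{blahut:1982} to our dynamic setting. Throughout, $t$ is fixed, the belief $P_t^o(x_{t-1}|y_{t-1})$ is regarded as data (so $P_t^{\text{M}o}(x_t|y_{t-1})$ is a fixed input distribution), and the cost-to-go $R_{t+1}(D_{t+1}[y_t,P_{t+1}^o])$ is a fixed bounded function of $y_t$ obtained from the preceding backward stage. With these quantities frozen, introduce the stage functional
\begin{align*}
F_t[P,Q] &\triangleq \sum_{x_t,y_t} P_t^{\text{M}o}(x_t|y_{t-1})P(y_t|y_{t-1},x_t)\\
&\qquad \times\Bigl(\log\tfrac{P(y_t|y_{t-1},x_t)}{Q(y_t|y_{t-1})}-s_t\rho_t(x_t,y_t)\\
&\qquad\qquad +R_{t+1}(D_{t+1}[y_t,P_{t+1}^o])\Bigr),
\end{align*}
so that minimizing $F_t$ jointly in $(P,Q)$ over the product simplex yields $R_t(D_{s_t}[y_{t-1},P_t^o])-s_t D_{s_t}[y_{t-1},P_t^o]$ by Lemma \ref{lemma:double-min}. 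The KKT optimality formulas \eqref{AM-output}--\eqref{AM-policy} identify precisely \eqref{policy-update} as the unique minimizer of $F_t[\cdot,Q]$ for fixed $Q$, and \eqref{output-update} as the unique minimizer of $F_t[P,\cdot]$ for fixed $P$ (recognizing \eqref{output-update} as the fixed-point form of $Q \mapsto \sum_{x_t} P_t^{\text{M}o}(x_t|y_{t-1}) P(y_t|y_{t-1},x_t)$ after substituting \eqref{policy-update}).

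Next, I would establish monotonicity. Define $F_t^{(k)}\triangleq F_t[P_t^{(k)}(\cdot|\cdot,\cdot),P_t^{(k)}(\cdot|\cdot)]$. Since each half-step is the exact minimizer of its block, $F_t^{(k+1)}\le F_t^{(k)}$, with equality iff a joint fixed point has been reached. The decrement can be quantified in closed form via the log-sum inequality: the gap between consecutive iterates equals a sum of conditional Kullback--Leibler divergences between $P_t^{(k+1)}$ and $P_t^{(k)}$, in direct analogy with \cite[Theorem 5.2.6]{blahut:1982}. Because $F_t$ is bounded below on the compact product simplex (the exponentials $A_t[P_{t+1}^o](x_t,y_t,s_t)$ are strictly positive and bounded, as $R_{t+1}$ is a continuous function of the belief on a compact set after discretization), the monotone sequence $F_t^{(k)}$ converges, the KL-gaps vanish, and the iterates admit a convergent subsequence by Bolzano--Weierstrass.

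Then I would identify the limit. Any limit point $(P_t^\star,Q_t^\star)$ is simultaneously a fixed point of both block-updates, so it satisfies the two KKT systems jointly; these are exactly the necessary and sufficient optimality conditions of the convex (in fact biconvex, and jointly convex after eliminating $Q$ via the marginal-consistency constraint) program in Lemma \ref{lemma:double-min}. Uniqueness of the minimizer, coming from strict convexity of $F_t$ in $P$ for strictly positive $Q$ (and the same for $Q$ given $P$), forces the whole sequence to converge to $(P_t^\star,Q_t^\star)=P_t^*(y_t|y_{t-1},x_t)$ of \eqref{AM-policy}. Continuity of the rate and distortion functionals $\mathbb{I}_t[y_{t-1},P_t^o,\cdot]$ and $D_t[y_{t-1},P_t^o,\cdot]$ in \eqref{CostToGo-belief-policy}--\eqref{Distortion-belief-policy} in their test-channel argument then transfers the convergence to the two claimed limits, giving a point on the rate-distortion curve parametrized by $s_t$.

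The main obstacle I foresee is the honest treatment of the cost-to-go term. Unlike classical BAA, the effective ``letter cost'' $-s_t\rho_t(x_t,y_t)+R_{t+1}(D_{t+1}[y_t,P_{t+1}^o])$ is not primitive data but is itself the outcome of the preceding backward stage; one must argue that at stage $t$ this term is functionally independent of the current control policy $P_t(y_t|y_{t-1},x_t)$, which holds because $R_{t+1}$ depends on the belief $P_{t+1}^o(x_t|y_t)$ only through the index $y_t$ (on the discretized belief lattice $\mathcal{B}_{t+1}$, $P_{t+1}^o$ is snapped to the nearest lattice point and thereafter treated as given). Once this measurability-type issue is dispatched, the exponential weights $A_t[P_{t+1}^o](x_t,y_t,s_t)$ play exactly the role of $e^{s\rho(x,y)}$ in \cite{blahut:1982} and the remaining arguments---log-sum inequality, monotone decrement, KKT characterization---follow the classical template.
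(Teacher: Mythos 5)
Your overall skeleton agrees with the paper's argument: both treat \eqref{policy-update}--\eqref{output-update} as alternating block minimization of the stage Lagrangian from Lemma \ref{lemma:double-min} (with the cost-to-go $R_{t+1}(D_{t+1}[y_t,P_{t+1}^o])$ frozen as a per-letter cost inside $A_t[P_{t+1}^o]$, exactly as you argue in your last paragraph), both record that the resulting sequence of Lagrangian values is non-increasing and bounded below, hence convergent. Where you diverge is the step that identifies the limit with the optimum, and that is where there is a genuine gap. You argue: limit points are joint block fixed points, these satisfy the KKT systems, strict convexity gives a unique minimizer, hence the whole sequence of distributions converges and continuity transfers the claim. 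The strict-convexity claim does not hold in the form you need it: after eliminating $Q$ through the marginal, the stage functional is the conditional mutual information plus terms linear in $P_t(y_t|y_{t-1},x_t)$, which is convex but in general \emph{not} strictly convex in the test channel, and block-wise strict convexity of $F_t[\cdot,Q]$ and $F_t[P,\cdot]$ does not imply uniqueness of the joint minimizer of a biconvex problem. Moreover, monotone decrease with vanishing decrement plus ``limit points are block fixed points'' does not by itself give convergence to the \emph{global} minimum value of an alternating minimization: the fixed-point equations only reproduce the KKT stationarity conditions on output letters that keep positive probability, and a limit point of $P_t^{(k)}(y_t|y_{t-1})$ may have zero components, where the required KKT inequality is not an automatic consequence of being a fixed point. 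This boundary issue is precisely what the classical Blahut/Csisz\'ar analysis is designed to circumvent.

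The paper closes this gap differently, following the template of \cite[Theorem 5.2.6]{blahut:1982}: after monotonicity and boundedness, it establishes a \emph{lower bound on the difference between consecutive updates} of the Lagrangian value in terms of the iterates generated by \eqref{policy-update}--\eqref{output-update} (in effect a divergence-type bound that can be telescoped against the distance to an arbitrary, in particular the optimal, output distribution), and this is what forces the values $\mathbb{I}_t[y_{t-1},P_t^o,P_t^{(k)}]$ and $D_t[y_{t-1},P_t^o,P_t^{(k)}]$ to the parametric point $(D_{s_t}[y_{t-1},P_t^o],R_t(D_{s_t}[y_{t-1},P_t^o]))$ without any uniqueness or distribution-convergence claim. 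Note also that the theorem only asserts convergence of the rate and distortion \emph{values}, so your detour through convergence of the distributions is both stronger than needed and, as justified, not sound. To repair your proof, replace the uniqueness/limit-point step either by the quantitative decrement bound of Blahut (as the paper does) or by the Csisz\'ar--Tusn\'ady five-point property, both of which yield convergence of the values to the minimum directly; the rest of your argument (positivity of the iterates given a strictly positive $P_t^{(0)}(y_t|y_{t-1})$ and $A_t>0$, and the treatment of the cost-to-go as fixed data indexed by $y_t$ on the discretized belief lattice) is consistent with the paper.
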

\begin{proof}
We sketch the proof due to the space limitations. We first demonstrate that $\mathbb{I}_t[y_{t-1},P_t^o]-D_t[y_{t-1},P_t^o]$ iteratively updated by $P_t^{(k)}(y_t|y_{t-1})$ forms a non-increasing sequence under alternating minimization. Since this sequence is also bounded from below, it converges to a finite limit. Next, we establish a lower bound on the difference between consecutive updates of $\mathbb{I}_t[y_{t-1},P_t^o]-D_t[y_{t-1},P_t^o]$ by \eqref{policy-update} and \eqref{output-update}, and show that this difference decreases across iterations, which ensures the convergence.
\end{proof}
The implementation of Theorem \ref{theorem:convergence} is illustrated in Algorithm \ref{algo:DP-backward}. The following theorem supplements Algorithm \ref{algo:DP-backward} with a stopping criterion after a finite number of steps.
\begin{theorem}
\label{theorem:stopping-criterion}
(Stopping criterion of Algorithm \ref{algo:DP-backward}) 
For each $t\in\mathbb{N}_0^n$, the point $D_{s_t}[y_{t-1},P_t^o]$ given by \eqref{distortion-point-s} admits the following bounds
\begin{align}
    R_t&(D_{s_t}[y_{t-1},P_t^o]) \leq s_t D_{s_t}[y_{t-1},P_t^o] -\sum_{x_t\in\mathcal{X}_t}\bigg(P_t^{\text{M}o}(x_t|y_{t-1}) \nonumber\\
    &\log(\sum_{y_t\in\mathcal{Y}_t}P_t(y_t|y_{t-1})A_t[P_{t+1}^o](x_t,y_t,s_t))\bigg) \nonumber\\ 
    &-\sum_{y_t\in\mathcal{Y}_t}P_t(y_t|y_{t-1})c_t[y_{t-1}](y_t)\log c_t[y_{t-1}](y_t),
\label{CostToGo-upper-bound}\\
    R_t&(D_{s_t}[y_{t-1},P_t^o])\geq s_t D_{s_t}[y_{t-1},P_t^o]-\sum_{x_t\in\mathcal{X}_t}\bigg(P_t^{\text{M}o}(x_t|y_{t-1})\nonumber\\
    &\log(\sum_{y_t\in\mathcal{Y}_t}P_t(y_t|y_{t-1})A_t[P_{t+1}^o](x_t,y_t,s_t))\bigg)\nonumber\\ 
    &-\max_{y_t\in\mathcal{Y}_t}\log c_t[y_{t-1}](y_t),
\label{CostToGo-lower-bound}
\end{align}
where $c_t[y_{t-1}](y_t)$ is expressed as a function of fixed $y_{t-1}$ 
\begin{align*}
    c_t[y_{t-1}](y_t) = &\sum_{x_t\in\mathcal{X}_t} P_t^{\text{M}o}(x_t|y_{t-1})\nonumber\\& \frac{A_t[P_{t+1}^o](x_t,y_t,s_t)}{\sum_{y_t\in\mathcal{Y}_t}P_t(y_t|y_{t-1})A_t[P_{t+1}^o](x_t,y_t,s_t)}.
\end{align*}
\end{theorem}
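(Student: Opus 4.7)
The plan is to reduce the double minimization in Lemma \ref{lemma:double-min} to a single scalar minimization over $q(y_t) \triangleq P_t(y_t|y_{t-1})$ and then to derive both bounds by two applications of Jensen's inequality for the concave logarithm, paralleling the classical Blahut stopping criterion but carried out stage-wise with $A_t[P_{t+1}^o](x_t,y_t,s_t)$ absorbing the future cost-to-go.

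First, I would substitute the KKT-optimal test channel \eqref{AM-policy} into \eqref{DM-CostToGo}. Using the identity $\log(P_t^*(y_t|y_{t-1},x_t)/q(y_t)) = \log A_t[P_{t+1}^o](x_t,y_t,s_t) - \log\sum_{y_t} q(y_t) A_t[P_{t+1}^o](x_t,y_t,s_t)$, the $s_t\rho_t(x_t,y_t)$ and $R_{t+1}(D_{t+1}[y_t,P_{t+1}^o])$ terms cancel inside the bracket, and the $\sum_{y_t} P_t^*(y_t|y_{t-1},x_t)=1$ normalization collapses the inner $y_t$-sum to a single logarithm. Averaging against $P_t^{\text{M}o}(x_t|y_{t-1})$ produces the scalar functional $\hat R_t(q) \triangleq -\sum_{x_t} P_t^{\text{M}o}(x_t|y_{t-1}) \log \sum_{y_t} q(y_t) A_t[P_{t+1}^o](x_t,y_t,s_t)$, so that $R_t(D_{s_t}[y_{t-1},P_t^o]) - s_t D_{s_t}[y_{t-1},P_t^o] = \min_{q} \hat R_t(q)$, with minimizer $q^*$. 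Both bounds then follow by comparing $\hat R_t$ at the current iterate $q = P_t(y_t|y_{t-1})$ with $\hat R_t(q^*)$.

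For the upper bound I would use the alternating update $q'(y_t) = q(y_t) c_t[y_{t-1}](y_t)$ implicit in \eqref{output-update}. A direct computation gives $\hat R_t(q') - \hat R_t(q) = -\sum_{x_t} P_t^{\text{M}o}(x_t|y_{t-1}) \log \mathbb{E}_{Y_t\mid x_t}\{c_t[y_{t-1}](Y_t)\}$, where $Y_t\mid x_t$ has law $q(y_t) A_t/\sum_{y_t'} q(y_t') A_t$. The inequality $\log\mathbb{E}\{\cdot\}\geq\mathbb{E}\{\log\cdot\}$ (Jensen on concave $\log$) together with the fact that the induced $(x_t,y_t)$-marginal of this forward channel against $P_t^{\text{M}o}$ is exactly $q(y_t) c_t[y_{t-1}](y_t)$, yields $\hat R_t(q') - \hat R_t(q) \leq -\sum_{y_t} q(y_t) c_t[y_{t-1}](y_t) \log c_t[y_{t-1}](y_t)$; combining with $\hat R_t(q') \geq \hat R_t(q^*)$ and rearranging gives \eqref{CostToGo-upper-bound}. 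For the lower bound I would begin with $\hat R_t(q) - \hat R_t(q^*) = \sum_{x_t} P_t^{\text{M}o}(x_t|y_{t-1}) \log\bigl(\sum_{y_t} q^*(y_t) A_t/\sum_{y_t} q(y_t) A_t\bigr)$ and apply Jensen to the outer $P_t^{\text{M}o}$-average, $\sum_{x_t} P_t^{\text{M}o}\log f(x_t) \leq \log \sum_{x_t} P_t^{\text{M}o} f(x_t)$, obtaining $\hat R_t(q) - \hat R_t(q^*) \leq \log \sum_{y_t} q^*(y_t) c_t[y_{t-1}](y_t) \leq \max_{y_t} \log c_t[y_{t-1}](y_t)$, where the last step uses $\sum_{y_t} q^*(y_t)=1$. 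Rearranging reproduces \eqref{CostToGo-lower-bound}. The main obstacle is the bookkeeping in the reduction step: one has to verify that the distortion produced by $P_t^*(y_t|y_{t-1},x_t)$ against the belief $P_t^o(x_{t-1}|y_{t-1})$ matches exactly the $D_{s_t}[y_{t-1},P_t^o]$ defined in \eqref{distortion-point-s}, so that the additive $s_t D_{s_t}$ appearing in both bounds is the correct complement to $\hat R_t$; once this identification is in place, the two Jensen steps are purely algebraic and the DP embedding does not interfere with either inequality.
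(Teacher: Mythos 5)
Your proof is correct and follows essentially the same route as the paper, which simply invokes the derivation of Blahut's stopping-criterion theorem stage-wise under backward induction; your reduction to the single functional $-\sum_{x_t}P_t^{\text{M}o}(x_t|y_{t-1})\log\sum_{y_t}q(y_t)A_t[P_{t+1}^o](x_t,y_t,s_t)$ followed by the two Jensen steps is precisely that classical argument, with $A_t$ correctly absorbing the cost-to-go term $e^{-R_{t+1}}$. You in fact supply more detail than the paper does, including the needed check that the future-cost term cancels inside the bracket and that $\sum_{y_t}q(y_t)c_t[y_{t-1}](y_t)=1$ so the correction terms behave as in the static case.
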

\begin{proof}
The proof follows if in both \eqref{CostToGo-upper-bound}, \eqref{CostToGo-lower-bound} we use backward induction following at each time instant the derivation of \cite[Theorem 6.3.10]{blahut:1982}.
\end{proof}
Theorem \ref{theorem:stopping-criterion}, generates a stopping criterion for Algorithm \ref{algo:DP-backward} at the $k$-th iteration by setting the estimation error per time stage, i.e., $T_{U_t}[y_{t-1},P_t^o]-T_{L_t}[y_{t-1},P_t^o]$ where
\begin{align*}
    T_{U_t}[y_{t-1},P_t^o] &= \sum_{y_t\in\mathcal{Y}_t}P_t(y_t|y_{t-1})c_t[y_{t-1}](y_t)\log c_t[y_{t-1}](y_t)\nonumber\\
    T_{L_t}[y_{t-1},P_t^o] &= \max_{y_t\in\mathcal{Y}_t}\log c_t[y_{t-1}](y_t).
\end{align*}


\begin{algorithm}
\caption{Approximation of the Control Policy Backward in Time (Offline Training)}
\label{algo:DP-backward}
\begin{algorithmic}[1]
    \REQUIRE given source distribution$\{P_t(x_t|x_{t-1}):t\in\mathbb{N}_0^n\}$,\\ given belief state $P_t^o(x_{t-1}|y_{t-1})\in\mathcal{B}_t$, Lagrange multipliers $\{s_t\leq0:t\in\mathbb{N}_0^n\}$, error tolerance $\epsilon>0$
    \STATE {\textbf{Initialize} $\{P_t^{(0)}(y_t|y_{t-1}):t\in\mathbb{N}_0^n\}$\;}\\
    \FOR{$t=n:1$}
    \STATE $k\leftarrow0$
    \WHILE {$T_{U_t}[y_{t-1},P_t^o] - T_{L_t}[y_{t-1},P_t^o]  > \epsilon$}
    \STATE $P_t^{(k)}(y_t|y_{t-1}, x_t)\leftarrow$ (\ref{policy-update}) \;
    \STATE $P_t^{(k+1)}(y_t|y_{t-1})\leftarrow$ (\ref{output-update})\;
    \STATE $R_t(D_t[y_{t-1},P_t^o]) \leftarrow$~\eqref{CostToGo-belief-policy} 
    \STATE $k \leftarrow k + 1$\;
    \ENDWHILE
    \ENDFOR
    \ENSURE $\{P_t^*[P_t^o](y_t|y_{t-1}, x_t):t\in\mathbb{N}_1^n\}$,\\ $\{P_t^*[P_t^o](y_t|y_{t-1}):t\in\mathbb{N}_1^n\}$,\\ $\{R_t(D_{s_t}[y_{t-1},P_t^o]):t\in\mathbb{N}_1^n\}$ 
\end{algorithmic}
\end{algorithm}
\paragraph*{Comments on Algorithms \ref{algo:DP-backward}, \ref{algo:DP-forward}} Algorithm \ref{algo:DP-backward} approximates the  control policy (test-channel) $\{P_t^*[P_t^o](y_t|y_{t-1}, x_t):t\in\mathbb{N}_1^n\}$, the output distribution $\{P_t^*[P_t^o](y_t|y_{t-1}):t\in\mathbb{N}_1^n\}$, and the cost-to-go function $\{R_t(D_{s_t}[y_{t-1},P_t^o]):t\in\mathbb{N}_1^n\}$ 
as functions of the fixed $Y_{t-1}=y_{t-1}$ and the discretized belief state space $P_t^o(x_{t-1}|y_{t-1})\in\mathcal{B}_t$, and also the one-step look-ahead (time stage $t+1$) quantized belief state space $P_{t+1}^o(x_t|y_t)\in\mathcal{B}_{t+1}$. After collecting these functions backward from $t=n$ to $t=1$, we initiate the online Algorithm \ref{algo:DP-forward} to evaluate the cost-to-go forward in time and identify the optimizing distributions that can approximate the minimum in \eqref{NRDF-simplified}. The source distribution $P_0(x_0)$ and output distribution $P_0(y_0)$ at $t=0$ are given as initial conditions for forward computation, from which the initial control policy can be obtained $P_0(y_0|x_0)$, hence the initial posterior $P_1(x_0|y_0)$, which serves as the initial belief state $P_1^*(x_0|y_0)$. For each $t$, the best policies $P_t^*(y_t|y_{t-1},x_t)$ are determined by following the best trajectory $P_{t+1}^*(x_t|y_t)$ such that
\begin{align}
    &P_{t+1}^*(x_t|y_t)    =\arg\min_{\substack{P_{t+1}^o(x_t|y_t) \in\mathcal{B}_{t+1}}} \nonumber\\
    &\sum_{y_{t-1}\in\mathcal{Y}_{t-1}} R_t(D_{s_t}[y_{t-1},P_t^*])P_t(y_{t-1}),~\forall t=\mathbb{N}_2^{n-1}
\label{belief-argmin}
\end{align}
and eventually the minimum in (\ref{NRDF-simplified}) is approximated. Clearly, the larger the search space of the finite belief state, the better the approximation. Ideally, a sufficiently large belief state space can approximate near-optimally the minimum in (\ref{NRDF-simplified}).

\begin{algorithm}
\caption{Forward Computation of the Approximate Control Policy (Online Computation)}
\label{algo:DP-forward}
\begin{algorithmic}[1]
    \REQUIRE {$\{\mathcal{B}_t:t\in\mathbb{N}_1^n\}$ of given $\{P_t^o(x_{t-1}|y_{t-1}):t\in\mathbb{N}_1^n\}$,\\
    $\{P_t^*[P_t^o](y_t|y_{t-1}, x_t):t\in\mathbb{N}_1^n\}$, \\$\{P_t^*[P_t^o](y_t|y_{t-1}):t\in\mathbb{N}_1^n\}$,\\ $\{R_t(D_{s_t}[y_{t-1},P_t^o]):t\in\mathbb{N}_1^n\}$.}
    \STATE \textbf{Initialize} $P_0(x_0)$, $P_0(y_0)$, $P_1^*(x_0|y_0)=P(x_0|y_0)$ \;\\
    \FOR {$t = 1:n-1$}
    \STATE $P_{t+1}^*(x_t|y_t) \leftarrow$~\eqref{belief-argmin} \;
    \STATE $P_t^*(y_t|y_{t-1},x_t)\leftarrow$ \\ $\qquad P_t^*[P_t^*(x_{t-1}|y_{t-1}),P_{t+1}^*(x_t|y_t)](y_t|y_{t-1},x_t)$ \;
    \STATE $P_t^*(y_t|y_{t-1})\leftarrow$ \\ $\qquad P_t^*[P_t^*(x_{t-1}|y_{t-1}),P_{t+1}^*(x_t|y_t)](y_t|y_{t-1})$ \;
    \ENDFOR
    \STATE $P_n^*(y_n|y_{n-1},x_n)\leftarrow P_n^*[P_n^*(x_{n-1}|y_{n-1})](y_n|y_{t-1},x_n)$\;
    \STATE $P_n^*(y_n|y_{n-1})\leftarrow P_n^*[P_n^*(x_{n-1}|y_{n-1})](y_n|y_{n-1})$\;
    \ENSURE {$\quad$\\ $\{P_t^*(x_{t-1}|y_{t-1}):t\in\mathbb{N}_1^n\}$, $\{P_t^*(y_t|y_{t-1},x_t):t\in\mathbb{N}_0^n\}$,\\ $\{P_t^*(y_t|y_{t-1}):t\in\mathbb{N}_0^n\}$, $R_{[0,n]}^{na}(D_0,D_1,\ldots,D_n)=\sum_{t=0}^n I[P_t^*(x_{t-1}|y_{t-1}),P_t^*(y_t|y_{t-1},x_t)](X_t;Y_t|Y_{t-1})$.}
\end{algorithmic}
\end{algorithm}

\section{Numerical Examples}
\label{numerical-results}
This section provides numerical simulations to support our theoretical findings that led to Algorithms \ref{algo:DP-backward}, \ref{algo:DP-forward}. We consider two examples, both assuming binary alphabet spaces $\{\mathcal{X}_t=\mathcal{Y}_t=\{0,1\}:t\in\mathbb{N}_0^n\}$, with Hamming distortion given by
\begin{equation}
    \rho_t(x_t,y_t)\equiv\rho(x_t,y_t) = \bigg\{
    \begin{matrix}
        0, \ \ \text{if}\ x_t= y_t \\
        1, \ \ \text{if}\ x_t\neq y_t
    \end{matrix}, ~~\forall{t}.
\end{equation}
Moreover, we consider a belief state $P_t^o(x_{t-1}|y_{t-1})\in\mathcal{B}_t$, that consists of a matrix comprising two ``quantized'' binary probability distributions drawn from the original continuous state space.  We denote with $N_t$ each quantization level per $t$, which leads to a belief state space $\mathcal{B}_t$ with size $|\mathcal{B}_t| = N_t^2$, representing combinations of $2$ out of $N_t$ quantized binary distributions.

\begin{table}
\centering
\caption{Time consumption comparison}
\begin{tabular}{crrrrr}
    \hline
    Core & $N$=10 & $N$=15 & $N$=20 & $N$=25 & $N$=30 \\ \hline
    1 & 11:59.20 & 58:21.99 & 3:10:02.22 & 7:28:14.57 & 16:01:47.46 \\
    8 & 2:14.15 & 10:38.16 & 32:55.82 & 1:21:00.55 & 2:47:20.54 \\
    16 & 1:09.80 & 05:59.45 & 19:18.00 & 47:07.67 & 1:38:51.39 \\
    \hline
\end{tabular}
\label{table:time-consumption}
\end{table}

\begin{xmpl}
\label{example-time-varying}
(Time-varying binary symmetric Markov source)~
 The source distribution $P_t(x_t|x_{t-1})$ at each $t\in\mathbb{N}_1^n$ is chosen such that for each $t$, we have
\begin{equation}
\begin{split}
    P_t(x_t|x_{t-1}) = 
    \begin{pmatrix}
        1-\alpha_t & \alpha_t \\
        \alpha_t & 1-\alpha_t 
    \end{pmatrix}, ~\alpha_t\in(0,1).
\end{split}
\end{equation}
Moreover, we choose the quantization levels $\{N_t=N:t\in\mathbb{N}_1^n\}$ and the stagewise Lagrangian $\{s_t=s:t\in\mathbb{N}_0^n\}$. In Fig. \ref{fig:numerical-TV-RD-convergence}, we demonstrate some results applying Algorithms \ref{algo:DP-backward}, \eqref{algo:DP-forward} for $N=20$, $s=-2$, and $n=100$, whereas in Fig. \ref{fig-convergence} we illustrate several time stages selected during backward computation to verify the convergence of Algorithm \ref{algo:DP-backward}. 
For our example, in Table \ref{table:time-consumption} we compare the time consumption of Algorithm \ref{algo:DP-backward} across single and multi-core processing for various $N$ over a time horizon of $n=100$. Our results demonstrate significant improvement in terms of the computational time complexity with multi-core processing compared to the single-thread processing.
\end{xmpl}
\begin{figure}
\centering
\begin{subfigure}[b]{0.49\linewidth}
    \centering
    \includegraphics[height=3.6cm,width=\linewidth]{./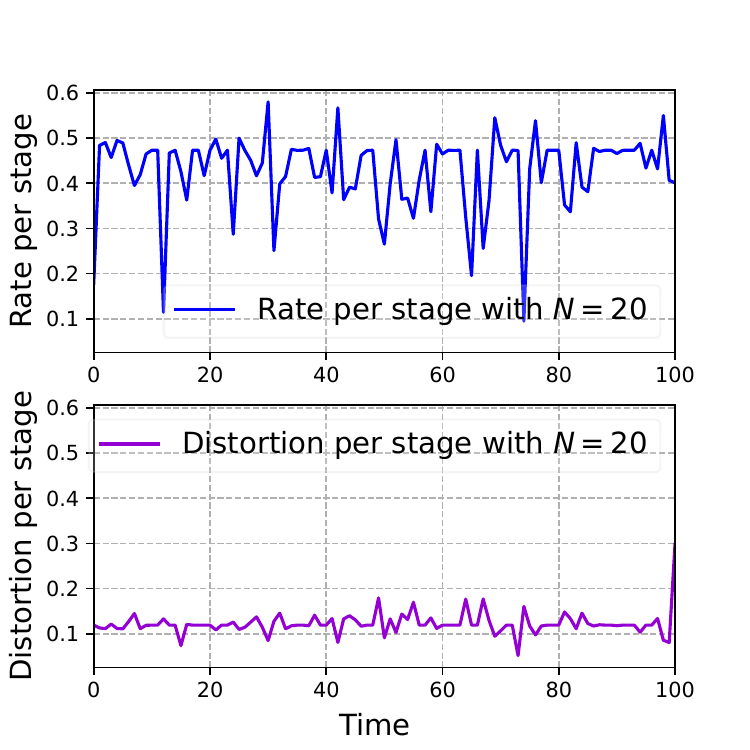}
    \caption{Rate \& distortion per stage}
    \label{fig-RD-TV}
\end{subfigure}
\begin{subfigure}[b]{0.49\linewidth}
    \centering
    \includegraphics[height=3.6cm,width=\linewidth]{./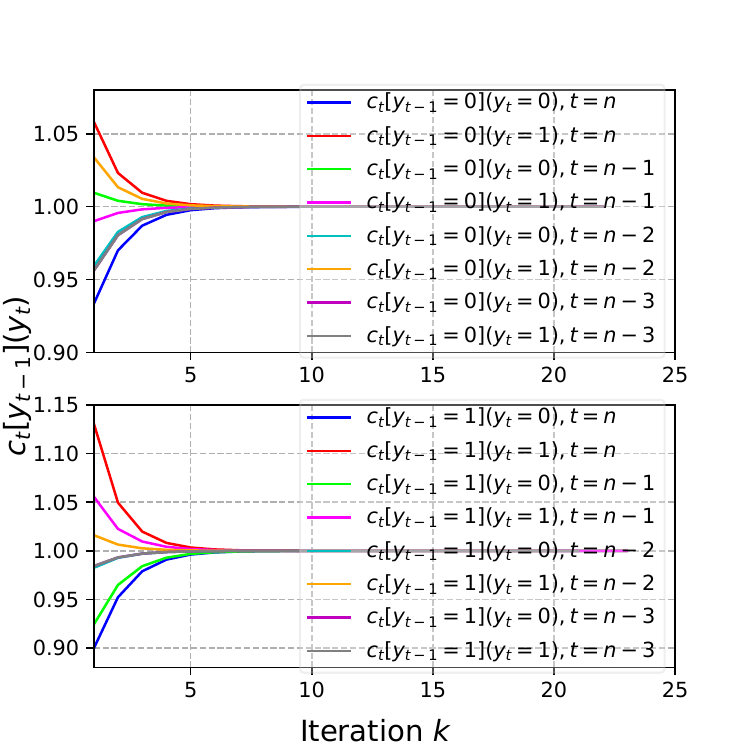}
    \caption{Convergence per stage}
    \label{fig-convergence}
\end{subfigure}
\caption{Illustration of the rate \& distortion per stage and stage-wise convergence for the time-varying case.}
\label{fig:numerical-TV-RD-convergence}
\end{figure}

\begin{xmpl}
\label{example-time-invariant}
(Time-invariant binary symmetric Markov source)
The source distribution $P_t(x_t|x_{t-1})\equiv{P}(x_t|x_{t-1}),~\forall{t}$, where
\begin{equation}
\begin{split}
	P(x_t|x_{t-1}) = 
	\begin{pmatrix}
		1-\alpha & \alpha \\
		\alpha & 1-\alpha 
	\end{pmatrix},~\alpha\in(0,1)~\forall{t}.
\end{split}
\end{equation}
In Fig. \ref{fig:numerical-invariant-N30} we show results for $\alpha=0.4$, $N=30$, $s=-2$, and $n=100$. Specifically, in Figs. \ref{fig-belief-30}-\ref{fig-testchannel-30}, we observe that the transients of the belief state, the output distribution and the control policy, respectively, reach a stationary value apart from the initial and terminal stages and this is also reflected in the behavior of the rate and distortion per stage in Fig. \ref{fig-rate-30}. Interestingly, when the belief state, the output distribution and the control policy obtain stationary distributions, these maintain a symmetric structure. 
\end{xmpl}

\begin{figure}
\centering
\begin{subfigure}[b]{0.49\linewidth}
    \centering
\includegraphics[height=3.6cm,width=\linewidth]{./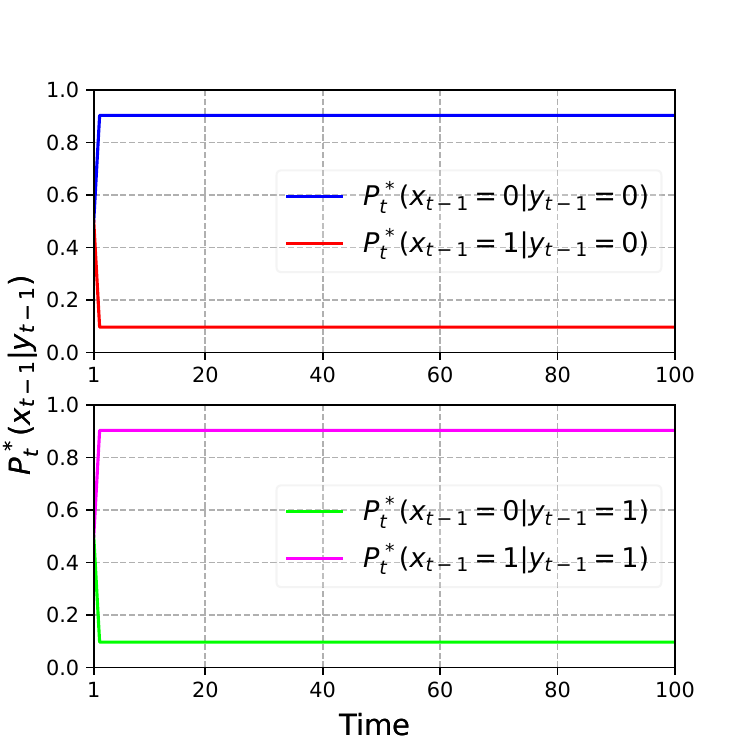}
    \caption{$P_t^*(x_{t-1}|y_{t-1})$}
    \label{fig-belief-30}
\end{subfigure}
\begin{subfigure}[b]{0.49\linewidth}
    \centering
    \includegraphics[height=3.6cm,width=\linewidth]{./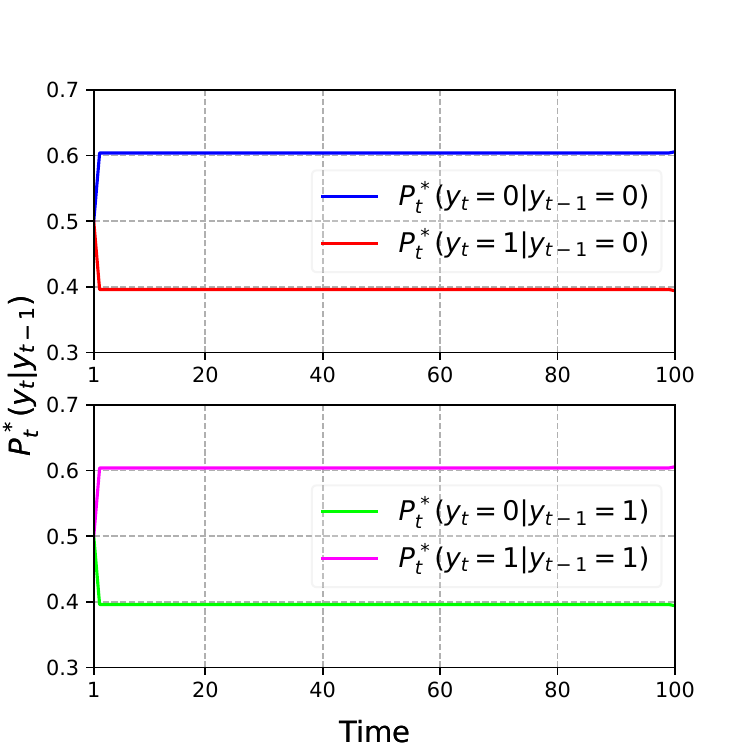}
    \caption{$P_t^*(y_t|y_{t-1})$}
    \label{fig-output-30}
\end{subfigure}
\begin{subfigure}[b]{0.49\linewidth}
    \centering
    \includegraphics[height=3.6cm,width=\linewidth]{./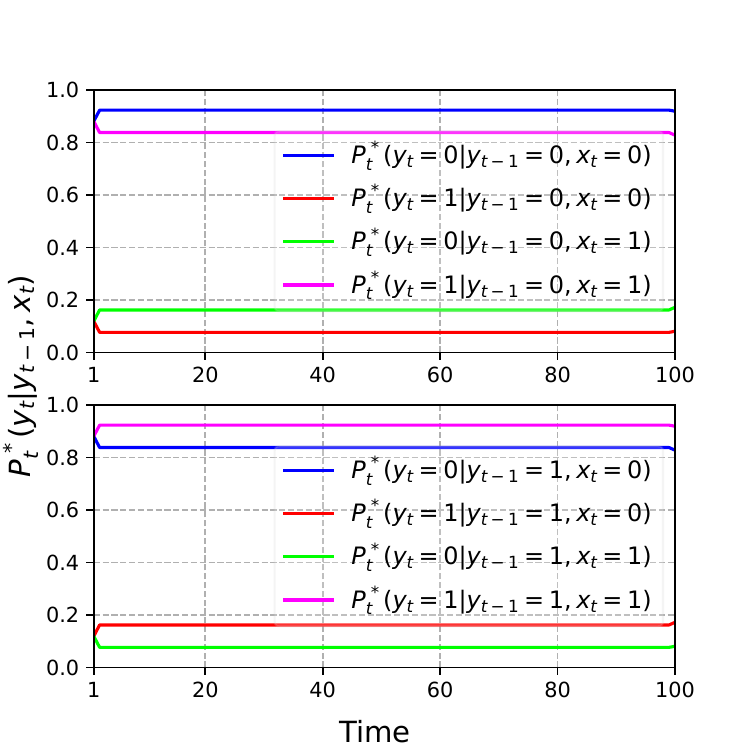}
    \caption{$P_t^*(y_t|y_{t-1},x_t)$}
    \label{fig-testchannel-30}
\end{subfigure}
\begin{subfigure}[b]{0.49\linewidth}
    \centering
    \includegraphics[height=3.6cm,width=\linewidth]{./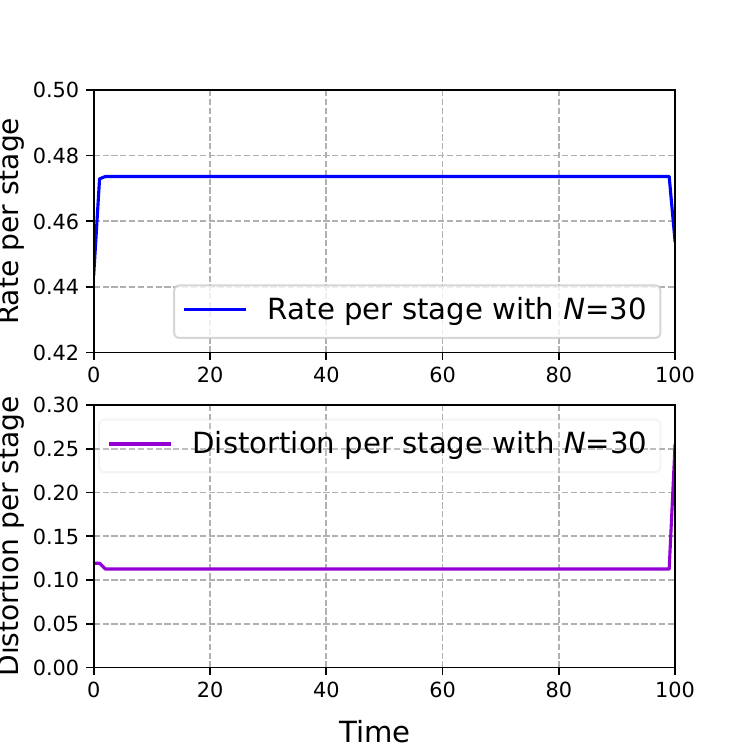}
    \caption{Rate \& distortion per stage}
    \label{fig-rate-30}
\end{subfigure}
\caption{Illustration of the best trajectory $\{P_t^*(x_{t-1}|y_{t-1}):t\in\mathbb{N}_1^n\}$, the corresponding test-channels, output, and rate per stage ($t\in\mathbb{N}_0^n$) in the time-invariant case.}
\label{fig:numerical-invariant-N30}
\end{figure}
\section{Conclusion}
\label{conclusion}
We derived a non-asymptotic lower bound for a zero-delay variable-rate lossy source coding system assuming discrete Markov sources. By leveraging some new structural and convexity properties of NRDF, we approximated the problem via an unconstrained partially observable finite-horizon stochastic DP and proposed a novel dynamic AM scheme to compute the control policy and the cost-to-go function through an offline training algorithm followed by an online computation. Our theoretical results are supplemented with simulation studies that considered binary Markov processes.

\bibliographystyle{IEEEtran}
\bibliography{string,literature_conf}

\begin{thebibliography}{10}
\providecommand{\url}[1]{#1}
\csname url@samestyle\endcsname
\providecommand{\newblock}{\relax}
\providecommand{\bibinfo}[2]{#2}
\providecommand{\BIBentrySTDinterwordspacing}{\spaceskip=0pt\relax}
\providecommand{\BIBentryALTinterwordstretchfactor}{4}
\providecommand{\BIBentryALTinterwordspacing}{\spaceskip=\fontdimen2\font plus
\BIBentryALTinterwordstretchfactor\fontdimen3\font minus \fontdimen4\font\relax}
\providecommand{\BIBforeignlanguage}[2]{{%
\expandafter\ifx\csname l@#1\endcsname\relax
\typeout{** WARNING: IEEEtran.bst: No hyphenation pattern has been}%
\typeout{** loaded for the language `#1'. Using the pattern for}%
\typeout{** the default language instead.}%
\else
\language=\csname l@#1\endcsname
\fi
#2}}
\providecommand{\BIBdecl}{\relax}
\BIBdecl

\bibitem{strinati:2024}
E.~C. Strinati~et al., ``Goal-oriented and semantic communication in 6{G} {AI}-native networks: The 6{G}-{GOALS} approach,'' in \emph{Joint European Conference on Networks and Communications \& 6{G} Summit (EuCNC/6{G} Summit)}, 2024, pp. 1--6.

\bibitem{shannon:1959}
C.~Shannon, ``Coding theorems for a discrete source with a fidelity criterion,'' \emph{IRE Conv. Rec.}, pp. 142--163, 1993.

\bibitem{hespanha:2007}
J.~P. Hespanha, P.~Naghshtabrizi, and Y.~Xu, ``A survey of recent results in networked control systems,'' \emph{Proc. IEEE}, vol.~95, no.~1, pp. 138--162, 2007.

\bibitem{akyildiz:2002}
I.~Akyildiz, W.~Su, Y.~Sankarasubramaniam, and E.~Cayirci, ``Wireless sensor networks: a survey,'' \emph{Comput. Netw.}, vol.~38, no.~4, pp. 393--422, 2002.

\bibitem{witsenhausen:1979}
H.~S. Witsenhausen, ``On the structure of real-time source coders,'' \emph{Bell Syst. Tech. J.}, vol.~58, no.~6, pp. 1437--1451, July 1979.

\bibitem{varaiya:1983}
P.~Varaiya and J.~Walrand, ``Causal coding and control for markov chains,'' \emph{Systems \& Control Letters}, vol.~3, no.~4, pp. 189--192, 1983.

\bibitem{kumar:1986}
P.~R. Kumar and P.~Varaiya, \emph{Stochastic systems: estimation, identification and adaptive control}.\hskip 1em plus 0.5em minus 0.4em\relax Prentice-Hall, Inc., 1986.

\bibitem{bertsekas:2005}
D.~P. Bertsekas, \emph{Dynamic programming and optimal control}.\hskip 1em plus 0.5em minus 0.4em\relax Athena Scientific, 2005.

\bibitem{teneketzis:2006}
D.~{Teneketzis}, ``On the structure of optimal real-time encoders and decoders in noisy communication,'' \emph{IEEE Trans. Inf. Theory}, vol.~52, no.~9, pp. 4017--4035, Sep. 2006.

\bibitem{mahajan:2009}
A.~Mahajan and D.~Teneketzis, ``Optimal design of sequential real-time communication systems,'' \emph{IEEE Trans. Inf. Theory}, vol.~55, no.~11, pp. 5317--5338, 2009.

\bibitem{linder:2014}
T.~{Linder} and S.~{Y\"uksel}, ``On optimal zero-delay coding of vector {M}arkov sources,'' \emph{IEEE Trans. Inf. Theory}, vol.~60, no.~10, pp. 5975--5991, Oct 2014.

\bibitem{wood:2017}
R.~G. {Wood}, T.~{Linder}, and S.~{Y\"uksel}, ``Optimal zero delay coding of {M}arkov sources: Stationary and finite memory codes,'' \emph{IEEE Trans. Inf. Theory}, vol.~63, no.~9, pp. 5968--5980, Sep. 2017.

\bibitem{ghomi:2021}
M.~Ghomi, T.~Linder, and S.~Y{\"u}ksel, ``Zero-delay lossy coding of linear vector {M}arkov sources: Optimality of stationary codes and near optimality of finite memory codes,'' \emph{IEEE Trans. Inf. Theory}, vol.~68, no.~5, pp. 3474--3488, 2022.

\bibitem{kaspi:2012}
Y.~Kaspi and N.~Merhav, ``Structure theorems for real-time variable rate coding with and without side information,'' \emph{IEEE Trans. Inf. Theory}, vol.~58, no.~12, pp. 7135--7153, 2012.

\bibitem{cregg:2024}
L.~Cregg, T.~Linder, and S.~Y{\"u}ksel, ``Reinforcement learning for near-optimal design of zero-delay codes for {M}arkov sources,'' \emph{IEEE Trans. Inf. Theory}, vol.~70, no.~11, pp. 8399--8413, 2024.

\bibitem{gorbunov:1972}
A.~K. Gorbunov and M.~S. Pinsker, ``Nonanticipatory and prognostic epsilon entropies and message generation rates,'' \emph{Problems Inf. Transmiss.}, vol.~9, no.~3, pp. 184--191, {J}uly-{S}ept. 1972.

\bibitem{tatikonda:2004}
S.~Tatikonda, A.~Sahai, and S.~Mitter, ``Stochastic linear control over a communication channel,'' \emph{IEEE Trans. Autom. Control}, vol.~49, pp. 1549 -- 1561, 2004.

\bibitem{charalambous:2014}
C.~D. {Charalambous}, P.~A. {Stavrou}, and N.~U. {Ahmed}, ``Nonanticipative rate distortion function and relations to filtering theory,'' \emph{IEEE Transactions on Automatic Control}, vol.~59, no.~4, pp. 937--952, 2014.

\bibitem{silva:2011}
E.~I. Silva, M.~S. Derpich, and J.~{\O}stergaard, ``A framework for control system design subject to average data-rate constraints,'' \emph{IEEE Trans. Autom. Control}, vol.~56, no.~8, pp. 1886 -- 1899, 2011.

\bibitem{derpich:2012}
M.~S. Derpich and J.~{\O}stergaard, ``Improved upper bounds to the causal quadratic rate-distortion function for {G}aussian stationary sources,'' \emph{IEEE Trans. Inf. Theory}, vol.~58, no.~5, pp. 3131 -- 3152, May 2012.

\bibitem{tanaka:2017}
T.~Tanaka, K.~K.~K. Kim, P.~A. Parrilo, and S.~K. Mitter, ``Semidefinite programming approach to {G}aussian sequential rate-distortion trade-offs,'' \emph{IEEE Trans. Autom. Control}, vol.~62, no.~4, pp. 1896--1910, April 2017.

\bibitem{stavrou:2018siam}
P.~A. Stavrou, T.~Charalambous, C.~D. Charalambous, and S.~Loyka, ``Optimal estimation via nonanticipative rate distortion function and applications to time-varying {G}auss-{M}arkov processes,'' \emph{SIAM J. on Control Optim.}, vol.~56, no.~5, pp. 3731--3765, 2018.

\bibitem{kostina:2019a}
V.~{Kostina} and B.~{Hassibi}, ``Rate-cost tradeoffs in control,'' \emph{IEEE Trans. Autom. Control}, vol.~64, no.~11, pp. 4525--4540, Nov 2019.

\bibitem{charalambous:2022}
C.~D. Charalambous, T.~Charalambous, C.~Kourtellaris, and J.~H. van Schuppen, ``Complete characterization of {G}orbunov and {P}insker nonanticipatory epsilon entropy of multivariate {G}aussian sources: Structural properties,'' \emph{IEEE Trans. Inf. Theory}, vol.~68, no.~3, pp. 1440--1464, 2022.

\bibitem{stavrou:2024tac}
P.~A. Stavrou and M.~Skoglund, ``Indirect nrdf for partially observable {G}auss-{M}arkov processes with {MSE} distortion: Characterizations and optimal solutions,'' \emph{IEEE Trans. Autom. Control}, vol.~69, no.~9, pp. 5867 -- 5882, 2024.

\bibitem{blahut:1982}
R.~E. Blahut, \emph{Principles and practice of information theory}.\hskip 1em plus 0.5em minus 0.4em\relax Addison-Wesley Longman Publishing Co., Inc., 1987.

\bibitem{stavrou:2020entropy}
P.~A. Stavrou, J.~{\O}stergaard, and M.~Skoglund, ``Bounds on the sum-rate of {MIMO} causal source coding systems with memory under spatio-temporal distortion constraints,'' \emph{Entropy}, vol.~22, no.~8, 2020.

\bibitem{massey:1990}
J.~L. Massey, ``Causality, feedback and directed information,'' in \emph{Proc. Int. Symp. Inf. Theory Appl.}, {N}ov. 27-30 1990, pp. 303--305.

\bibitem{stavrou:2015}
P.~A. Stavrou, C.~D. {Charalambous}, and C.~K. Kourtellaris, ``Information nonanticipative rate distortion function and its applications,'' \emph{arxiv.org}, 2014.

\bibitem{charalambous:2016}
C.~D. Charalambous and P.~A. Stavrou, ``Directed information on abstract spaces: Properties and variational equalities,'' \emph{IEEE Trans. Inf. Theory}, vol.~62, no.~11, pp. 6019--6052, Nov 2016.

\bibitem{gorbunov:1972b}
A.~K. Gorbunov and M.~S. Pinsker, ``Prognostic epsilon entropy of a {G}aussian message and a {G}aussian source,'' \emph{Problems Inf. Transmiss.}, vol.~10, no.~2, pp. 93--109, {A}pr.-{J}une 1972, translation from Problemy Peredachi Informatsii, vol. 10, no. 2, pp. 5-–25, April-June 1974.

\bibitem{cover-thomas:2006}
T.~M. Cover and J.~A. Thomas, \emph{Elements of Information Theory}, 2nd~ed.\hskip 1em plus 0.5em minus 0.4em\relax John Wiley \& Sons, Inc., Hoboken, New Jersey, 2006.

\bibitem{boyd:2004}
S.~Boyd and L.~Vandenberghe, \emph{Convex Optimization}.\hskip 1em plus 0.5em minus 0.4em\relax New York, NY, USA: Cambridge University Press, 2004.

\end{thebibliography}

\end{document}